\newacronym{5G}{5G}{Fifth-Generation}
\newacronym{4G}{4G}{Fourth-Generation}
\newacronym{eMBB}{eMBB}{Enhanced Mobile BroadBand}
\newacronym{mMTC}{mMTC}{Massive Machine Type Communications}
\newacronym{URLLC}{URLLC}{Ultra-Reliable Low Latency Communications}
\newacronym{NR}{NR}{New Radio}
\newacronym{MIMO}{MIMO}{Multiple Input Multiple Output}
\newacronym{LTE}{LTE}{Long Term Evolution}
\newacronym{D2D}{D2D}{Device-to-Device}
\newacronym{M2M}{M2M}{Machine-to-Machine}
\newacronym{V2V}{V2V}{Vehicle-to-Vehicle}
\newacronym{V2X}{V2X}{Vehicle-to-Everything}
\newacronym{ProSe}{ProSe}{Proximity Services}
\newacronym{PS}{PS}{Public Safety}
\newacronym{IoT}{IoT}{Internet of Things}
\newacronym{CSI}{CSI}{Channel State Information}
\newacronym{HARQ}{HARQ}{Hybrid Automatic Repeat Request}
\newacronym{AMC}{AMC}{Adaptive Modulation Coding}
\newacronym{TDD}{TDD}{Time Division Duplex}
\newacronym{TDMA}{TDMA}{Time Division Multiple Access}
\newacronym{QoS}{QoS}{Quality of Service}
\newacronym{FDD}{FDD}{Frequency Division Duplex}
\newacronym{DTMC}{DTMC}{Discrete Time Markov Chain}
\newacronym{DRX}{DRX}{Discontinuous Reception}
\newacronym{DL}{DL}{Downlink}
\newacronym{UL}{UL}{Uplink}
\newacronym{3GPP}{3GPP}{3rd Generation Partnership Project}
\newacronym{SNR}{SNR}{Signal-to-Noise Ratio}
\newacronym{SINR}{SINR}{Signal-to-Interference-plus-Noise Ratio}
\newacronym{OP}{OP}{Optimization Problem}
\newacronym{RB}{RB}{Resource Block}
\newacronym{RE}{RE}{Resource Emplacement}
\newacronym{PUCCH}{PUCCH}{Physical Uplink Control Channel}
\newacronym{BS}{BS}{Base Station}
\newacronym{UB}{UB}{Upper Bound}
\newacronym{UE}{UE}{User Equipment}
\newacronym{RI}{RI}{Rank Index}
\newacronym{PMI}{PMI}{Precoding Matrix Indicator}
\newacronym{CQI}{CQI}{Channel Quality Index}
\newacronym{DCI}{DCI}{Downlink Control Information}
\newacronym{AWGN}{AWGN}{Additive White Gaussian Noise}
\newacronym{QPSK}{QPSK}{Quadrature Phase-Shift Keying}
\newacronym{PDCCH}{PDCCH}{Physical Downlink Control Channel}
\newacronym{PUSCH}{PUSCH}{Physical Uplink Shared Channel} 
\newacronym{RRC}{RRC}{Radio Resource Control} 
\newacronym{E-UTRA}{E-UTRA}{Evolved Universal Mobile Telecommunications System Terrestrial Radio Access,}  
\newacronym{MANET}{MANET}{Mobile Ad-Hoc Networks}  
\newacronym{OFDMA}{OFDMA}{Orthogonal Frequency Division Multiple Access}  
\newacronym{SC-FDMA}{SC-FDMA}{Single Carrier Frequency Division Multiple Access}  
\newacronym{EC}{EC}{Energy Consumption}  
\newacronym{EE}{EE}{Energy Efficiency} 
\newacronym{MU}{MU}{Master User Equipment}
\newacronym{TTI}{TTI}{Time Transmission Interval}
\newacronym{ISD}{ISD}{Inter-Site Distance}
\newacronym{RF}{RF}{Radio Frequency}
\newacronym{UAV}{UAV}{unmanned aerial vehicle}
\newacronym{CDF}{CDF}{Cumulative Distribution Function}
\newtheorem{proposition}{Proposition}
\newtheorem{theorem}{Theorem}[section]
\newtheorem{remark}{Remark}[section]
\algnewcommand\algorithmicswitch{\textbf{switch}}
\algnewcommand\algorithmiccase{\textbf{case}}
\algnewcommand\algorithmicassert{\texttt{assert}}
\algnewcommand\Assert[1]{\State \algorithmicassert(#1)}%
\newcommand{\argmin}{\mathop{\mathrm{argmin}}}
\begin{document}
%

\title{Distributed vs. Centralized Scheduling in \gls{D2D}-enabled Cellular Networks}
%
%
%

\author{Rita Ibrahim,~\IEEEmembership{Member,~IEEE,}
        Mohamad Assaad,~\IEEEmembership{Senior Member,~IEEE,}
        Berna Sayrac,~\IEEEmembership{Member,~IEEE,}
        and~Azeddine Gati,~\IEEEmembership{Member,~IEEE}}
\maketitle

\begin{abstract}
 Employing channel adaptive resource allocation can yield to a large enhancement in almost any performance metric of \acrfull{D2D} communications. We observe that \gls{D2D} users are able to estimate their local \gls{CSI}, however the base station needs some signaling exchange to acquire this information. Based on the \gls{D2D} users' knowledge of their local \gls{CSI}, we provide a scheduling framework that shows how distributed approach outperforms centralized one. We start by proposing a centralized scheduling that requires the knowledge of \gls{D2D} links' \gls{CSI} at the base station level. This \gls{CSI} reporting suffers from the limited number of resources available for feedback transmission. Therefore, we benefit from the users' knowledge of their local \gls{CSI} to develop a distributed algorithm for \gls{D2D} resource allocation. In distributed approach, collisions may occur between the different \gls{CSI} reporting; thus a collision reduction algorithm is proposed. We give a description on how both centralized and distributed algorithms can be implemented in practice. Furthermore, numerical results are presented to corroborate our claims and demonstrate the gain that the proposed scheduling algorithms bring to cellular networks.
\end{abstract}

\begin{IEEEkeywords}
\gls{D2D} communications, resource allocation, distributed scheduling, \gls{CSI} feedback
\end{IEEEkeywords}

\IEEEdisplaynontitleabstractindextext

%
\IEEEpeerreviewmaketitle


\section{Introduction}
%
%
%
%

 


\IEEEPARstart{I}{n} cellular networks, the knowledge of the channel condition at the transmitter can improve the performance of cellular communications by allowing the transmitter to dynamically adapt its transmission scheme and providing by that a better throughput (i.e. \gls{AMC} scheme). For \gls{D2D} communications, each \gls{D2D} user has the knowledge of its local \gls{D2D} channel state. For both \gls{TDD} and \gls{FDD} cellular networks, \textit{feedback} is one way for keeping the \gls{BS} updated with the \gls{D2D} channels' measurements. Mobile users estimate their \gls{D2D} link and feed it back to the \gls{BS} that benefits from this knowledge to optimize the performance of \gls{D2D} communications. However, this feedback is imperfect since a limited  number of resources are available for the exchange of control information. Thus, in a limited feedback network, a quantized channel measurement is reported to the \gls{BS} by either quantizing the properties of the transmitted signal (e.g. modulation, beam-forming vector) or quantizing the channel (i.e. adapt the transmitted signal to the channel). \footnote{Part of this work has been presented at ACM MSWiM 2018}
\subsection{Related work} 
Resource allocation in cellular networks suffers from the imperfection in \gls{CSI} knowledge which is mainly caused by: limited resources available for feedback, channel estimation error and the channel feedback delay. The effect of feedback delay on resource allocation is studied in \cite{Ahmad2009}. Feedback allocation and resource allocation were proposed in \cite{Deghel2018} where both limited feedback resources and delayed  feedback  information  are  assumed. Based on the outdated \gls{CSI} knowledge available at the \gls{BS}, authors in \cite{Ahmad2010} proposed a centralized resource allocation and relay selection framework that reduces the power consumption of cellular networks. \cite{Hassan2011}

In addition, researchers have been interested in elaborating new resource allocation schemes that improve the performance of \gls{D2D} networks in terms of: throughput, interference management and energy saving etc. Most of the existing works assume the global \gls{CSI} knowledge at the \gls{BS} level and propose centralized \gls{D2D} resource allocation algorithms. Several tools have been used for the study of centralized resource allocation problems: stochastic geometry modeling and resource optimization (e.g. \cite{Ye2014}), centralized graph-theoretic approach (e.g. \cite{Maghsudi2016} and \cite{Zhang2013}), mixed-integer programming (e.g. \cite{Aijaz2014}, \cite{Zulhasnine2010} and \cite{Han2012}), particle swarm optimization (e.g.\cite{Su2013}), non-convex optimization problem using branch-and-bound method (e.g. \cite{Yu2014}) and coupled processors approach (e.g.\cite{Vitale2015Modeling}) etc.

Several works limit the amount of CSI overhead by considering a partial CSI knowledge of the \gls{D2D}-enabled cellular network. In \cite{Feng2016}, authors consider that the BS has the global CSI knowledge except the interference links between UEs. In \cite{Maghsudi2016}, the CSI knowledge is restricted to the cellular links. Authors of \cite{Tang2016} consider statistical (and not instantaneous) CSI and propose a power allocation scheme for D2D-underlay cellular systems based on monotonic optimization. A stochastic cutting plane algorithm was proposed in \cite{Liu2015} to achieve the cross-layer resource optimization without the knowledge of the channels' statistics.

The main challenge of centralized approaches is the need for the \gls{D2D} \gls{CSI} at the \gls{BS} level which suffers from a trade-off between the large amount of overhead (i.e. especially in scenarios where the channels vary rapidly with time) and the imperfect knowledge of the channels' states. Therefore, the full \gls{CSI} knowledge assumption is not practical and pushes for performing distributed approaches for resource allocation of \gls{D2D} communications.

Assuming the knowledge of the utility function at the \gls{D2D} users' level, game theory has been the main tool used for elaborating  distributed resource allocation: pricing (e.g. \cite{Ye2015}), auctions (e.g. \cite{Xu2013}) and coalition formation (e.g. \cite{Li2014}) etc. Game-theoretical approaches do not solve the overhead problem because users still need to share information (i.e. prices or bids etc). In addition, several works have evaluated the performance of both centralized and distributed approaches for \gls{D2D} resource allocation. Authors in \cite{Yin2015} proved that their distributed algorithm achieves interesting performance gain with significant reduction of signaling overhead. In \cite{Maghsudi2016}, both centralized and distributed resource allocation strategies were proposed for an underlay \gls{D2D} communication system.

Energy consumption consists a pertinent and important part of \gls{5G} networks at different levels: ecological side, customer satisfaction and mobile network operators' expenses. Hence, energy consumption has been one of the main performance criteria that scheduling algorithms aim to optimize. Centralized (e.g. \cite{MJung_ModeSelecion} and \cite{Xiao2011}) as well as distributed (e.g \cite{Silva2014} and \cite{Wang2015}) resource allocation algorithms were proposed for reducing the energy consumption of \gls{D2D}-enabled cellular networks. 
 
\subsection{Contribution and Organization} 
The centralized characteristic of today's scheduling in cellular networks suffers from the ignorance of the global \gls{CSI} knowledge of the network. Thus, the scheduling will always be limited by the number of resources available for \gls{CSI} reporting. This weakness will be multiplied by the use of \gls{D2D} technique where the \gls{D2D} channels are estimated at the \gls{D2D} receiver level and then reported to the \gls{BS}. From that comes the idea of having a distributed scheduling that benefits for the \gls{D2D} users' knowledge of their local \gls{CSI}. In this work, reducing the energy consumption under throughput constraint is the main goal of the proposed scheduling framework. However, we recall that the proposed algorithms are applicable for any other performance metric.

Overlay \gls{D2D}, i.e. dedicated resources for \gls{D2D} communications, is assumed in order to avoid interference. In this work, we propose both centralized and distributed scheduling algorithm that optimizes the energy consumption of overlay-\gls{D2D} networks under throughput constraints.\footnote{This is only an example and does not limit the application of our algorithm to any other \gls{D2D} performance metrics.} This optimization problem is studied based on Lyapunov technique. Lyapunov functions for general non-linear systems (i.e. especially stability analysis) is considered as a robust theoretical and practical tool. We start dealing with this Lyapunov optimization problem by proposing a centralized approach where the \gls{D2D} resource allocation is managed by the central entity, i.e. \gls{BS}. Based on channels' statistics, the \gls{BS} chooses a subset of \gls{D2D} pairs that will send their \gls{CSI} feedback to the \gls{BS}. Only the corresponding subset of \gls{CSI}s is then received at the \gls{BS} and the \gls{BS} schedules then the optimal \gls{D2D} link based on this subset of \gls{CSI}s knowledge. We show that the performance of the proposed centralized algorithm achieves that of the optimal centralized scheduling in a limited feedback network. 

In an ideal scenario, without limitation on the resources available for feedback transmission, centralized solution is the optimal one since the \gls{BS} can acquire the instantaneous \gls{CSI} knowledge of all the \gls{D2D} pairs. However, in realistic context of limited feedback scenario, the proposed centralized scheduling suffers from the limited resources available for feedback transmission where only a subset of \gls{D2D} pairs will be able to send its \gls{CSI} to the \gls{BS}. Since this subset is selected based on the statistics of  channel states, there will be no guarantee that the optimal \gls{D2D} link will be scheduled. In this work, we show that in limited feedback networks, distributed solutions may take advantage of the local \gls{CSI} knowledge of the \gls{D2D} pairs to achieve higher performance. 

Therefore, we propose a distributed algorithm that benefits from the users' knowledge of their local \gls{D2D} channel state to intelligently manage the spectrum access and optimize the energy consumption of \gls{D2D} communications under throughput constraint. Indeed, each user transmits a simple control indicator to reveal the value of its local \gls{D2D} channel state. Based on these indicators, all the users, including the optimal one, manage to report information concerning their local \gls{CSI}. However, a collision may occur while sharing these \gls{CSI} indicators. The impact of this collision is discussed and some strategies for reducing its probability are proposed. Under these conditions of collision reduction, the distributed scheduling identifies the optimal \gls{D2D} links and tends by that to achieve the performance of the \textit{ideal} scheduling (i.e. where the \gls{BS} knows the instantaneous \gls{CSI} of all the \gls{D2D} links without any cost). 

We show how both centralized and distributed algorithms can be simply implemented in real cellular network, i.e. \gls{LTE}. Numerical results reveal how the suggested algorithms reduce the energy consumption of \gls{D2D} communications under throughput constraint. In a limited feedback \gls{D2D} networks, the proposed centralized algorithm schedules \gls{D2D} communications based on the \gls{CSI} statistics whereas the proposed distributed algorithm benefits from the users' knowledge of their instantaneous local \gls{CSI}. Indeed, the distributed algorithm outperforms the centralized one due to the fact that, in a limited feedback \gls{D2D} network, \gls{D2D} users have more information about their local \gls{CSI} than the \gls{BS}. 

This paper will be organized as follows. In Section \ref{sec:RA_SystModel}, the system model and the optimization problem are described. The centralized approach is detailed in Section \ref{sec:RA_Cent}: the algorithm is exposed and its optimistic property is proved. The distributed algorithm is detailed in section \ref{sec:RA_Dist} and its performance is analyzed. Section \ref{sec:RA_ProbCollision} computes the probability of collision that may occur during the transmission of the \gls{CSI} indicators. Strategies for reducing this probability are discussed. Section \ref{sec:RA_Implementation} reveals how the proposed algorithms can be implemented in today's cellular networks. Numerical results in Section \ref{sec:RA_NumResults} show the performance of the proposed algorithms and illustrate the significant energy consumption reduction provided by the distributed algorithm compared to other scheduling policies. Section \ref{sec:RA_Conclusion} concludes the paper whereas the proofs are provided in the appendices. 

\section{System model \label{sec:RA_SystModel}}
We consider a set of $\mathrm{N}$ pairs of users that want to communicate with each other via \gls{D2D} links (see figure \ref{fig.scenario} as an example where $N=6$). Users are randomly distributed in a cell of radius $\mathrm{R_c}$ such that the distance between \gls{D2D} pair is set within the range $\left[d_{min},d_{max} \right]$. We denote by time-slot the time scale of the resource allocation decision. The channel between any two nodes in the network is modeled by a Rayleigh fading channel that remains constant during one time-slot and changes independently from one time-slot to another based on a complex Gaussian distribution with zero mean and unit variance.

A practical scenario with adaptive modulation is assumed. At each time slot, a device can support a bit-rate adapted to its channel conditions (i.e. \gls{SNR}) and selected from a set of  $\mathrm{M}$ bit rates ${\mathsf{\left[ R_1, ..., R_M\right]}}$. These bit-rates correspond to a set of $\mathrm{M}$ \gls{SNR} thresholds $\mathsf{\left[ S_1, ..., S_M\right]}$. When a \gls{D2D} link has a \gls{SNR} that lies within the interval $\left[ \mathsf{S_m},\mathsf{S_{m+1}}\right[$ then its quantized value of \gls{SNR} is equal to $\mathsf{S_m}$ and it will support a bit rate of $\mathsf{R_m}$. When the $n^{th}$ \gls{D2D} link has a \gls{SNR} of $\mathsf{S_m}$ then it can achieve a bit-rate equal to $\mathsf{R_m}$ by transmitting at a power $P_{n,m}$: 
\begin{equation}
\label{eq.Pow}
P_{n,m}:=\min \left\lbrace \frac{\mathsf{S_m}N_o}{|h_n|^2{L}_n }\,\,,\,\, P_{max} \right\rbrace 
\end{equation}
with $h_n$ the fading coefficient of the $n^{th}$ \gls{D2D} pair, $L_n$ the path-loss over the $n^{th}$ \gls{D2D} link that mainly depends on the $n^{th}$ \gls{D2D} pair distance  $d_n$, $N_o$ the noise power and $P_{max}$ the maximum user's transmission power. 
\begin{figure}[H]
\begin{centering}
\includegraphics[scale=0.25]{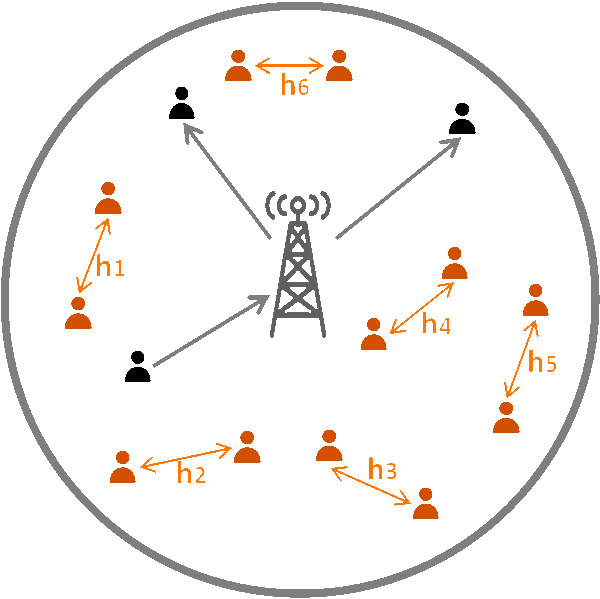}
\par\end{centering}
\caption{An example of D2D scenario where $N=6$}
\label{fig.scenario} 
\end{figure}

We consider overlay \gls{D2D} (e.g. see \cite{RitaOverlay2017}) where dedicated resources are allocated to \gls{D2D} links in order to mitigate interference between \gls{D2D} and cellular communications (i.e. no reuse of cellular resources). Scheduling scheme is the algorithm that manages the radio resources' access of \gls{D2D} communications. Assuming a user time division multiplex access scheduling, the available \gls{D2D} resources are only used by one \gls{D2D} communication at a given time-slot. We assume that, based on some pilot reference signals, each \gls{D2D} transmitter has a channel state estimation of its corresponding \gls{D2D} link. Deploying an energy aware scheduling requires the \gls{CSI} knowledge of these \gls{D2D} links. Therefore, based on existing control channels, we propose that \gls{D2D} links transmit a smart indicator of their local \gls{CSI} information in order to enable a scheduling that reduces the energy consumption of \gls{D2D} communications. $N _{RB}$ represents the number of \gls{RB} available each time-slot for \gls{D2D} \gls{CSI} reporting. We propose a new \gls{CSI} reporting technique that benefits efficiently from these limited resources in order to reduce the energy consumption of the \gls{D2D} communications while guaranteeing throughput constraints.

In the sequel, we use the following notations for a given time-slot $t$: $R_n\left( t \right)$  as the service rate of the $n^{th}$ \gls{D2D} communication, $P_n\left(t\right)$ as the transmission power of the $n^{th}$ \gls{D2D} communication and $R_{th}$ as the threshold of the time average throughput of \gls{D2D} communications with $\gamma_{th}$ the corresponding \gls{SNR}.

The aim of this work is to design a scheduling framework that reduces the energy consumption of \gls{D2D} networks under throughput constraint and in a limited feedback network where \gls{D2D} transmitters are limited by $N_{RB}$ \gls{RB}s for \gls{CSI} exchange. We define $\Gamma\left( t \right)$ as the scheduled user at time-slot $t$. Hence, the \gls{OP} below consists of finding the scheduling strategy $\Gamma\left( t \right)$ that minimizes the energy consumption of \gls{D2D} communications under throughput constraint and limited resources for \gls{CSI} feedback transmission:

\begin{equation}
\begin{aligned}
\label{OP_Gen}
& \underset{\Gamma}{\text{minimize}}
& & \lim_{T\rightarrow\infty}\sup\frac{1}{T}\sum_{t=1}^{T}\sum_{n=1}^{\mathrm{N}}\mathbb{E}\left[P_{n}\left(t\right)\right]
\end{aligned}
\end{equation}
\begin{equation*}
\begin{aligned}
& \text{s.t.}
& & \,\lim\limits_{T\rightarrow\infty}\inf\frac{1}{T}\sum\limits_{t=1}^{T}\mathbb{E}\left[ R_{n}\left(t\right)\right] \geq R_{th} \,\,\,\,\,\, \forall n, \\
& & & N_{RB} \text{: number of \gls{RB} for CSI feedback.}
\end{aligned}
\end{equation*}

We apply Lyapunov Optimization \cite{neely2010stochastic} to solve the problem above. Looking at problem's throughput constraint leads to the construction of the following virtual queues that help to meet the desired constraint:
\begin{equation}
\label{eq.Q}
Q_{n}\left(t+1\right)=\left[Q_{n}\left(t\right)-R_{n}\left(t\right)\right]^{+}+R_{th}
\end{equation}
From queuing theory \cite{neely2010stochastic}, we known that the throughput constraint of the problem \ref{OP_Gen} is equivalent to the strong stability of the virtual queues (\ref{eq.Q}). The optimization problem is transformed to a stabilization problem of the virtual queuing network while minimizing the time average of the users' transmission power. Hence, the drift-plus-penalty algorithm is used for minimizing the average power subject to network stability. When the system shifts to undesirable states, the defined Lyapunov function largely increases. Thus, scheduling actions that drift this function to the negative direction are crucial for settling the system stability. The scheduling policy $\Gamma$ aims to minimize the following expression:

\begin{equation}
\label{OP_Lyap}
\begin{aligned}
& \underset{\Gamma}{\text{minimize}}
& &\sum_{n=1}^{\mathrm{N}}V\mathbb{E}\left[P_{n}\left(t\right)\right]-Q_{n}\left(t\right)\mathbb{E}\left[R_{n}\left(t\right)\right]
\end{aligned}
\end{equation}
\begin{equation*}
\begin{aligned}
& \text{s.t.}
& & N_{RB} \text{: number of \gls{RB} for CSI feedback}
\end{aligned}
\end{equation*}

where $Q_{n}\left(t+1\right)=\left[Q_{n}\left(t\right)-R_{n}\left(t\right)\right]^{+}+R_{th}$ and $V$ is a non-negative weight that is chosen in such a way that the desired performance trade-off between the power minimization and the virtual queue size is achieved. The scheduling solution of (\ref{OP_Lyap}) aims to achieve a time average of the users' power consumption within a distance of at most $O\left(\frac{1}{V}\right)$ from the optimal value while ensuring a time average virtual queue backlog of $O\left(V\right)$.

\section{Centralized approach \label{sec:RA_Cent}}
For the centralized approach, we suppose that the user's \gls{CSI} feedback contains the following information concerning its \gls{D2D} link: (i) channel quality (i.e. transmission rate) as well as (ii) transmission power. Since we aim to minimize the \gls{D2D} users' transmission power, both channel quality and transmission power information are required. Due to the limited amount of resources available for \gls{CSI} reporting, it is not possible for all the users to transmit their \gls{CSI} feedback each time-slot neither to transmit the exact continuous values of their \gls{CSI}.  Therefore, depending on the number of resource blocks $N_{RB}$ available for \gls{CSI} reporting, a limited number of users is able to simultaneously transmit its quantized \gls{CSI} feedback to the \gls{BS}. The number of quantized \gls{CSI} feedback (e.g. 20-22 encoded bits) that can be simultaneously supported at a given time-slot $t$ is denote by $K^{\left(1\right)}\left( N_{RB}\right)$ which depends on the number $N_{RB}$ of resources available for \gls{CSI} reporting. For clarity, we omit the variable $\left( N_{RB}\right)$ when $K^{\left(1\right)}$ notation is used.

The centralized approach is based on the following three phases algorithm: \textbf{Phase 1} where the \gls{BS} chooses, based on global statistical \gls{CSI}, the subset $\Lambda^*$ of users that will transmit their \gls{CSI} feedback to the \gls{BS} (i.e. with $|\Lambda^*|\leq K^{\left(1\right)}$);  \textbf{Phase 2} where the \gls{BS} receives the \gls{CSI} feedback from the users belonging to the subset $\Lambda^*$ and \textbf{Phase 3} where the \gls{BS} schedules the user, in the subset $\Lambda^*$, that optimizes the energy consumption metric (\ref{OP_Lyap}) of the \gls{D2D} communications. We denote by $\Omega$ the set of all the possible subset of $K^{\left( 1\right)}$ different users.

\subsection{Centralized Algorithm}
Moreover, the centralized algorithm is given in Algorithm \ref{Algo_Cent} for a time-slot $t$ and will be detailed in this subsection.

\textbf{Phase 1}: The goal of this phase is to choose the subset $\Lambda^*$ of $K^{\left(1\right)}$ \gls{D2D} transmitters that will send their \gls{CSI} to the \gls{BS} at a given time-slot. Based on the global knowledge of the \gls{D2D} \gls{CSI} statistics, the \gls{BS} computes the optimal subset $\Lambda^*$ as follows:
\begin{equation}
\label{statCSI}
\resizebox{1\hsize}{!}{$
\Lambda^*:=\argmin\limits_{\Lambda \subset \varOmega}\mathbb{E}_{h}\left[\min\limits_{n \in \Lambda}\left[VP_{n}\left(t,h\right)-Q_{n}\left(t\right)R_{n}\left(t,h\right)\right]\right]$}
\end{equation}
\textbf{Phase 2}: Each transmitter $n$ of the subset $\Lambda^*$ will proceed as follows: (i) computes the index $m^* \in \lbrace 1,..,\mathrm{M} \rbrace$ that minimizes its utility function $VP_{n,m}\left(t\right)-Q_{n}\left(t\right)\mathsf{R_{m}}$, (ii) fixes respectively its bit rate and its transmission power as follows: $R_n\left(t\right)=\mathsf{R_{m^*}}$ and $P_n\left(t\right)=P_{n,m^*}\left(t\right)$ (iii) quantizes the transmission power $P_n\left(t\right)$ (i.e. denoted by $\tilde{P}_{n}\left(t\right)$) (iv) sends a \gls{CSI} feedback that contains both: the channel quality (i.e. which implies the chosen bit rate $R_n\left(t\right)$) and the quantized transmission power $\tilde{P}_{n}\left(t\right)$.

\textbf{Phase 3} Among the users in the subset $\Lambda^*$, the \gls{BS} schedules the \textit{optimal user} $n^*$ which corresponds to the user that verifies equation (\ref{eq.BUE}):

\begin{equation}
\label{eq.BUE}
n^*=\argmin_{n \in\Lambda^*}\left[V\tilde{P}_{n}\left(t\right)-Q_{n}\left(t\right)R_{n}\left(t\right)\right]
\end{equation}

\begin{algorithm}
\caption{Centralized scheduling at the level of the \gls{BS}}\label{Algo_Cent}
\begin{algorithmic}[1]
\State Finds subset $\Lambda^*$ given by (\ref{statCSI})
\State Sends \gls{CSI} reporting request to users in subset $\Lambda^*$ 
\State Receives $R_n\left(t\right)$ and $\tilde{P}_n\left(t\right)$ from all users $n \in \Lambda^*$
\State Find the optimal user $ n^*$ given by (\ref{eq.BUE})
\State Update $\bm{Q}$ based on (\ref{eq.Q})
\end{algorithmic}
\end{algorithm}
\subsection{Stability and optimistic criteria}
We prove that the proposed centralized scheduling achieves a distance of at most $O\left( \frac{1}{V} \right)$ from the optimal solution of the centralized scenario while guaranteeing the stability of the system of virtual queues.
We denote by $P_c^*$ the optimal solution of the \gls{OP} (\ref{OP_Gen}) when the best centralized scheduling is applied in a limited feedback network. In this case, the \gls{BS} knows the global statistical \gls{CSI} and the instantaneous \gls{CSI} of only a subset of $K^{\left(1\right)}$ users.
\begin{proposition}
\label{prop.StabCentQeuues}
When the proposed centralized algorithm is applied, the total average backlogs of the queues is upper bounded by a finite value $\frac{C+B}{\epsilon}$:
\begin{equation}
\limsup\limits_ {T \rightarrow \infty} \frac{1}{T}\sum\limits_{t=0}^{T-1} \sum\limits_{i=1}^\mathrm{N}\mathbb{E}\left[Q_i\left(t\right) \right] \leq \frac{C+B}{\epsilon}
\end{equation}
 \end{proposition}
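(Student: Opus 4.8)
The plan is to run the standard drift-plus-penalty argument of Lyapunov optimization on the virtual queues (\ref{eq.Q}). First I would introduce the quadratic Lyapunov function $L\left(\bm{Q}(t)\right) := \frac{1}{2}\sum_{n=1}^{\mathrm{N}} Q_n(t)^2$ and examine its one-slot conditional drift $\Delta\left(\bm{Q}(t)\right) := \mathbb{E}\left[L\left(\bm{Q}(t+1)\right) - L\left(\bm{Q}(t)\right) \mid \bm{Q}(t)\right]$. Squaring the recursion (\ref{eq.Q}), using $\left(\left[x\right]^+\right)^2 \leq x^2$ together with $\left[Q_n(t)-R_n(t)\right]^+ \leq Q_n(t)$, and exploiting that the achievable rates lie in the finite set $\left[R_1,\dots,R_M\right]$, I would reach a bound of the form $\Delta\left(\bm{Q}(t)\right) \leq B + \sum_n Q_n(t)\left(R_{th} - \mathbb{E}\left[R_n(t)\mid\bm{Q}(t)\right]\right)$, with $B$ a finite constant absorbing the bounded second-moment terms.

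Next I would add the penalty $V\sum_n \mathbb{E}\left[P_n(t)\mid\bm{Q}(t)\right]$ to both sides, so that the right-hand side displays precisely the per-slot objective $\sum_n\left[VP_n(t) - Q_n(t)R_n(t)\right]$ that Algorithm \ref{Algo_Cent} minimizes through (\ref{statCSI})--(\ref{eq.BUE}). The key step is to upper-bound this minimized quantity by substituting an $\epsilon$-slack benchmark: because the proposed scheme minimizes the drift-plus-penalty term over every centralized policy implementable under the $N_{RB}$-feedback budget, its value cannot exceed that of a stationary randomized policy $\Gamma^{\mathrm{alt}}$ meeting $\mathbb{E}\left[R_n^{\mathrm{alt}}(t)\right] \geq R_{th} + \epsilon$ with power at most $P_{max}$. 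This yields $\Delta\left(\bm{Q}(t)\right) + V\sum_n \mathbb{E}\left[P_n(t)\mid\bm{Q}(t)\right] \leq \left(B+C\right) - \epsilon\sum_n Q_n(t)$, where $C := V\mathrm{N}P_{max}$.

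To conclude, I would drop the non-negative penalty term, take total expectations, and telescope over $t = 0,\dots,T-1$; the Lyapunov terms cancel except for the boundary values $\mathbb{E}\left[L\left(\bm{Q}(0)\right)\right]$ and $\mathbb{E}\left[L\left(\bm{Q}(T)\right)\right] \geq 0$. Dividing by $\epsilon T$ and letting $T \to \infty$ makes the initial-state contribution vanish, leaving exactly the claimed bound $\frac{C+B}{\epsilon}$.

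I expect the main obstacle to be justifying the $\epsilon$-slack benchmark within the limited-feedback class: one must verify that a Slater-type feasibility margin survives when only $K^{(1)}$ users may report their CSI per slot, so that the optimum of (\ref{statCSI})--(\ref{eq.BUE}) is genuinely no worse than the benchmark. Once the boundedness of rates and powers is invoked, the drift algebra itself is routine.
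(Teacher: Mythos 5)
Your proposal is correct and follows essentially the same route as the paper: the quadratic Lyapunov function, the drift-plus-penalty bound with a finite second-moment constant, comparison of the minimized per-slot objective against an $\epsilon$-slack benchmark policy within the limited-feedback class (the paper uses the optimal limited-feedback policy $\Gamma_c^*$ and assumes the rate vector is interior to the stability region, which is exactly the Slater-type condition you flag), and the telescoping argument giving $\frac{C+B}{\epsilon}$. The only differences are cosmetic: you swap the names of the two constants and make the benchmark power bound explicit as $V\mathrm{N}P_{max}$, where the paper leaves it as an abstract finite constant $B$.
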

The proposed centralized scheduling policy ensures the strong stability of the virtual queuing network with an average queue backlog of $O\left(V\right)$. Hence, the throughput constraint of the \gls{OP} (\ref{OP_Gen}) is satisfied.

\begin{proposition}
\label{prop.Cent_bounds}
For the centralized approach, the time average of power consumption verifies the following:
\begin{equation}
\resizebox{1\hsize}{!}{$
{P}_c^{*} \leq \lim_{T\rightarrow\infty}\sup\frac{1}{T}\sum\limits_{t=1}^{T}\sum\limits_{n=1}^{\mathrm{N}}\mathbb{E}\left[P_{n}\left(\Gamma^{cent}\left(t\right)\right)\right] \leq {P}_c^{*}+\frac{C}{V}$}
\label{eq.boundPowCent}
\end{equation}
Where $C$ and $V$ are finite and the value of $V$ is tuned in such a way that the time average power is as close as possible to the solution of the optimal centralized limited-feedback scenario $P_c^*$ with a corresponding queue size trade-off.
\end{proposition}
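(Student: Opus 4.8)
The plan is to follow the standard drift-plus-penalty argument of Lyapunov optimization \cite{neely2010stochastic}, adapted to the limited-feedback decision set. First I would introduce the quadratic Lyapunov function $L(t)=\frac{1}{2}\sum_{n=1}^{\mathrm{N}}Q_n(t)^2$ and compute the one-slot conditional drift $\Delta(t)=\mathbb{E}\left[L(t+1)-L(t)\mid\bm{Q}(t)\right]$. Squaring the queue recursion (\ref{eq.Q}) and using $([Q_n(t)-R_n(t)]^+)^2\le(Q_n(t)-R_n(t))^2$ together with the boundedness of the admissible rates (each $R_n(t)\le\mathsf{R_M}$ with a constant arrival $R_{th}$), I would obtain a bound of the form $\Delta(t)\le C+\sum_n Q_n(t)R_{th}-\sum_n Q_n(t)\mathbb{E}\left[R_n(t)\mid\bm{Q}(t)\right]$, where $C$ collects the second-moment terms and is finite precisely because the rate set is finite and $P_{max}$ caps the transmission power.

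Next I would add the penalty $V\sum_n\mathbb{E}\left[P_n(t)\mid\bm{Q}(t)\right]$ to both sides to form the drift-plus-penalty, so that the decision-dependent part of the right-hand side is exactly the objective (\ref{OP_Lyap}). The core observation is that the two-phase centralized scheduler minimizes this conditional expectation over all feasible limited-feedback policies: Phase 1 selects the subset $\Lambda^*$ by (\ref{statCSI}) so as to minimize the expected within-subset minimum, and Phase 3 realizes that minimum through (\ref{eq.BUE}); hence no other limited-feedback policy can drive the slot-wise drift-plus-penalty lower.

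I would then invoke the existence of an optimal stationary randomized ($\omega$-only) policy $\Gamma^{\star}$ that attains the limited-feedback optimum $P_c^{*}$ while meeting every throughput constraint, $\mathbb{E}\left[R_n^{\star}(t)\right]\ge R_{th}$; such a policy exists by the standard Caratheodory-type argument of the framework. Plugging $\Gamma^{\star}$ into the minimized right-hand side and using $Q_n(t)\ge 0$ with $R_{th}-\mathbb{E}\left[R_n^{\star}(t)\right]\le 0$ cancels the queue-dependent terms and yields $\Delta(t)+V\sum_n\mathbb{E}\left[P_n(t)\mid\bm{Q}(t)\right]\le C+VP_c^{*}$. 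Taking total expectations, summing over $t=0,\dots,T-1$, telescoping the drift, and discarding the nonnegative $\mathbb{E}\left[L(T)\right]$ gives $\frac{1}{T}\sum_{t}\sum_n\mathbb{E}\left[P_n(t)\right]\le P_c^{*}+\frac{C}{V}+\frac{\mathbb{E}\left[L(0)\right]}{VT}$; letting $T\to\infty$ establishes the upper bound of (\ref{eq.boundPowCent}). The lower bound is immediate: since Proposition \ref{prop.StabCentQeuues} guarantees that the proposed policy is feasible for (\ref{OP_Gen}), its time-average power cannot beat the optimum $P_c^{*}$ by the very definition of $P_c^{*}$.

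The main obstacle is the optimality of the two-phase scheduler within the restricted decision set, not the telescoping. One must verify that decoupling the problem into a statistics-based subset choice (\ref{statCSI}) followed by an instantaneous within-subset selection (\ref{eq.BUE}) genuinely minimizes the slot-wise conditional drift-plus-penalty among all policies that observe at most $K^{(1)}$ instantaneous CSIs, and that the comparison policy $\Gamma^{\star}$ lives in this same feasible class so that the inequality chain is legitimate.
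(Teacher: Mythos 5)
Your proposal is correct and follows essentially the same drift-plus-penalty route as the paper's proof: bound the one-slot drift with a finite constant $C$, use the fact that the two-phase scheduler (statistics-based subset choice (\ref{statCSI}) followed by within-subset selection (\ref{eq.BUE})) minimizes the slot-wise conditional drift-plus-penalty over all limited-feedback policies, compare against the optimal limited-feedback policy, and telescope to get the $O\left(\frac{1}{V}\right)$ gap, with the lower bound following from feasibility of the proposed policy. The only cosmetic difference is that you invoke an $\omega$-only stationary randomized optimal policy with nonnegative slack, whereas the paper compares directly to $\Gamma_c^*$ under the assumption that the arrival rate vector is interior to the stability region (slack $\epsilon>0$, which it also reuses for the queue-backlog bound of Proposition \ref{prop.StabCentQeuues}); both devices play the same role in cancelling the queue-dependent terms.
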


\begin{proof}
Proofs of propositions \ref{prop.StabCentQeuues} and \ref{prop.Cent_bounds} are based on Lyapunov technique and detailed in Appendix-\ref{proof.Dist_bounds}.
\end{proof}

We deduce that for a large finite value of $V$, the proposed centralized algorithm achieves the optimal solution of the centralized scenario which has $P_c^*$ as the optimal time average power.

\section{Distributed approach \label{sec:RA_Dist}}
Finding the optimal solution of problem (\ref{OP_Lyap}) requires the global \gls{CSI} knowledge of \gls{D2D} links. This knowledge is limited by the restricted number $N_{RB}$ of resources available for \gls{CSI} reporting. However, one can profit from the local \gls{CSI} knowledge at the \gls{D2D} users' level in order to propose a new way for handling the resources available for \gls{CSI} feedback. This enables all the users to feedback some indicators concerning their \gls{CSI} at each time-slot.  After sharing \gls{CSI} indicators, the user that optimizes (\ref{OP_Lyap}) is identified. This study shows that the proposed distributed approach largely reduces the energy consumption of the \gls{D2D} network. The proposed distributed scheduling (given in Algorithm \ref{Algo_Dist}) can be summarized as follows:\\
\textbf{Phase  1}: each \gls{D2D} pair estimates its channel state in order to compute its energy consumption metric. Then, each \gls{D2D} user shares a simple \gls{CSI} indicator (e.g. 1 or 2 encoded bits per time-slot) in such a way that the index of the \gls{RE} on which this \gls{CSI} indicator is transmitted point out the value of the energy consumption metric.\\
\textbf{Phase 2}: The \gls{D2D} user that has transmitted its \gls{CSI} indicator at the \gls{RE} of the lowest index and which corresponds to the user that minimizes the energy consumption of \gls{D2D} communications is scheduled for data transmission. \\
\textbf{Phase 3}: This phase aims to reduce the collision that may occur during the transmission of \gls{CSI} indicators. 

\subsection{Distributed algorithm}
The different steps of the distributed algorithm (given in algorithm \ref{Algo_Dist}) are detailed in the sequel. The number of \gls{CSI} indicators (e.g. 1 or 2 encoded bits) that can be simultaneously supported at a given time-slot $t$ is denote by $K^{\left(2\right)}\left( N_{RB}\right)$ which depends on the number $N_{RB}$ of resources available for \gls{CSI} reporting. For clarity, we omit the variable $\left( N_{RB}\right)$ when $K^{\left(2\right)}$ notation is used.

\textbf{Phase 1}: Based on pilot reference signals, at each time-slot $t$, the $n^{th}$ \gls{D2D} pair estimates its \gls{D2D} channel state $h_n \left(t \right)$ and deduces its energy consumption metric given by equation (\ref{eq.vn}).
\begin{equation} 
\label{eq.vn} 
v_n\left(t\right)=\min_{m \in \lbrace 1,...,\mathrm{M} \rbrace}{VP_{n,m}\left(t\right)-Q_n\left(t\right)R_m\left(t\right)}
\end{equation}
 
We can verify that the values of the utility function $v_n\left(t\right)$ fit within the range $\left[v_{min}\left(t\right),v_{max}\left(t\right) \right]$ given by equations (\ref{eq.vmin}) and  (\ref{eq.vmax}) where $r=1$ and $f=0$. 

\begin{equation} 
\label{eq.vmin} 
v_{min}\left(t\right)=-tR_{th}\mathsf{R_M}
\end{equation}
\begin{equation} 
\label{eq.vmax} 
v_{max}\left(t\right)=v_{min}\left(t\right)+r\frac{VP_{max}-R_{th}\mathsf{R_1}-v_{min}\left(t\right)}{\left( K^{\left(2\right)}\right)^f}
\end{equation}
These border variables are identically computed by each device in such a way that all the devices will have the same values of $v_{min}\left(t\right)$ and $v_{max}\left(t\right)$. The interval  $\left[v_{min}\left(t\right),v_{max}\left(t\right) \right]$ serves for the discretization of the utility function $v_n\left(t\right)$ into $K^{\left(2\right)}$ equal intervals. The values of $v_n\left( t\right)$ from the continuous set $\left[v_{min}\left(t\right),v_{max}\left(t\right) \right]$ are mapped to a finite set $\mathbb{S}_v$ of $K^{\left(2\right)}$ elements (see equation (\ref{eq.set_v})). The simple way to quantize the utility function $v_{n}$ is to choose the closest element to $v_n$ within $\mathbb{S}_v$. 
\[
\mathbb{S}_v=\bigcup\limits_{ j=1,...,K^{\left(2\right)}} a_j
\]
\begin{equation}
\label{eq.set_v} =\bigcup\limits_{ j=1,...,K^{\left(2\right)}} \left\{ v_{min}+ \left( j-1 \right)\frac{v_{max}-v_{min}}{K^{\left(2\right)}-1}\right\}
\end{equation}

\begin{algorithm}[H]
\caption{Distributed scheduling at the level of each user $n$}\label{Algo_Dist}
\begin{algorithmic}[1]
\State Receives, from \gls{BS}, constants $R_{th}$, $T_p$ and $V$ (given by \ref{eq.V_eps}) 
\For{$1 \leq t \leq T_p$}
\State Estimates channel state $h_n\left(t\right)$
\State Computes performance metric $v_n\left(t\right)$ from (\ref{eq.vn})
\State Computes $v_{min}$ and $v_{max}$ from (\ref{eq.vmin}) and (\ref{eq.vmax})
\State Finds $\lbrace \tilde{v}_{n},\tilde{k}_n \rbrace$ given by (\ref{eq.matching_n}) and (\ref{eq.matching_k})
\State Shares CSI indicator at the $\tilde{k}_n^{th}$ RE 
\If {Collision}
\State Detects collision index $c$ 
\State Updates the values of $r$ and $f$ from (\ref{eq.Update_r_f})
\EndIf
\If {$n==n^*$ given by (\ref{eq.BUE_dist})}
\State Transmits data to its \gls{D2D} pair 
\EndIf
\State Update $Q_n\left(t\right)$ from (\ref{eq.Q})
\EndFor
\end{algorithmic}
\end{algorithm}
In practice, cellular networks contain limited resources for \gls{CSI} reporting (i.e. here denoted by $N_{RB}$ resource blocks). Thus, instead of transmitting the quantized \gls{CSI} value, each \gls{D2D} pair limits its feedback to a simple \gls{CSI} indicator (e.g. 1 or 2 encoded bit) that is sufficient for describing its utility function. This mechanism, i.e. called \textbf{\textit{Channel Indexing Feedback}}, consists of introducing a mapping between the quantized value of the utility function (i.e. within the $\mathbb{S}_v$ set) and the $K^{\left(2\right)}$ \gls{RE}s available for the transmission of the \gls{CSI} indicators. In other terms, the index of the \gls{RE}s used for the \gls{CSI} indicator's transmission is sufficient for identifying the corresponding quantized value of the energy consumption metric. Figure \ref{fig.Matching} illustrates an example where the proposed {Channel Indexing Feedback} technique is applied on 4 different users with $n=1,2,3$ and $4$. For the $n^{th}$ \gls{D2D} pair, the expressions of the quantized utility function (i.e. denoted by $\tilde{v}_n$) and its corresponding mapped \gls{RE} used for the transmission of the \gls{CSI} indicator (i.e. denoted by $\tilde{k}_n$) are formally given by: 

\begin{equation}
\label{eq.matching_n}
\tilde{v}_n\left(t\right)=\argmin \limits_{x_i \in \mathbb{S}_{v}}  \left( v_n\left(t\right)-x_i \right)  \mathbbm{1}_{\lbrace v_n\left(t\right)>x_i \rbrace}
\end{equation}
\begin{equation}
\label{eq.matching_k}
\tilde{k}_n\left(t\right)=\argmin \limits_{ i\in \left\lbrace  1,..., K^{\left(2\right)} \right\rbrace  }  \left( v_n\left(t\right)-x_i \right) \mathbbm{1}_{\lbrace v_n\left(t\right)>x_i \rbrace}
\end{equation}
where $\mathbbm{1}$ represents the indicator function.
\begin{figure}
\begin{centering}
\includegraphics[scale=0.6]{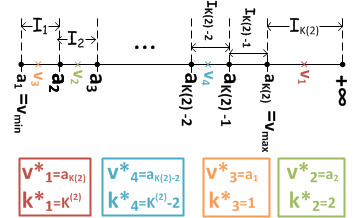}
\par\end{centering}
\caption{Example of channel indexing feedback for 4 users}
\label{fig.Matching} 
\end{figure}

In phase 1, at each time-slot $t$, each transmitter $n$ computes its performance metric $v_n\left(t\right)$ given by (\ref{eq.vn}) and then applies {\textit{Channel Indexing Feedback}} technique to compute the couple $\left \lbrace \tilde{v}_n\left(t\right), \tilde{k}_n\left(t\right) \right \rbrace$ given respectively by (\ref{eq.matching_n}) and (\ref{eq.matching_k}) and which correspond to the quantized value of $v_n\left(t\right)$ and its corresponding index within the set $\mathbb{S}_v$. The transmitter $n$ sends its \gls{CSI} indicator at the $\tilde{k}_n^{th}$  \gls{RE} among the $K^{\left(2\right)}$ available \gls{RE}s. 

In phase 2, depending on the shared \gls{CSI} indicators by all the \gls{D2D} pairs,  the scheduled user is the one that exclusively transmits its \gls{CSI} indicator on the \gls{RE} with the lowest index $\tilde{k}_n\left(t\right)$. The chosen user corresponds to the \gls{D2D} pair that minimizes the utility function $v_n\left(t\right)$ at time-slot $t$ and by that reduces the energy consumption of \gls{D2D} network. The index $n^*$ of the scheduled \gls{D2D} pair is given by:
\begin{equation}
\label{eq.BUE_dist}
\resizebox{1\hsize}{!}{$n^*=\argmin\limits_{n \in \lbrace 1,...,N \rbrace }\lbrace \tilde{v}_n \,\,\,\, | \,\, \tilde{v}_m \neq \tilde{v}_n  \,\,\,\,\forall \,\, m \in \lbrace 1,...,N \rbrace / n \rbrace$}
\end{equation}

The proposed algorithm is distributed in the sense that \gls{D2D} users are responsible of computing their utility function as well as transmitting the corresponding \gls{CSI} indicators based on their estimated \gls{CSI}. The computation efforts are highly reduced at the \gls{BS} level. However, recognizing the optimal \gls{D2D} pair remains an open question since it can be done in a fully distributed manner or in a \gls{BS}-assisted way. In the sequel, we discuss these two approaches:
\begin{itemize}
\item \textbf{ \gls{BS}-assistance:} \gls{BS} is responsible of listening to the transmitted \gls{CSI} indicators by all the \gls{D2D} pairs in order to detect the optimal user whose \gls{CSI} indicator has been transmitted at the \gls{RE} of the lowest index. Then, the \gls{BS} schedules the identified optimal user. Such network assistance encounters the business challenge of distributed algorithm for mobile network operators that generally prefer to support centralized administrated solutions for controlling the network and guaranteeing the performance of cellular communications. 
\item \textbf{ Autonomous:} Supposing that devices have full duplex capacities, they can simultaneously transmit and receive the \gls{CSI} indicators. Hence, each user will be able to autonomously recognize whether it corresponds to the optimal user that should be scheduled or not. If a \gls{D2D} pair send its \gls{CSI} indicator on \gls{RE} of index $c$ and does not receive any \gls{CSI} indicators on the \gls{RE}s of index $<c$ then this \gls{D2D} pair will recognize that it will be scheduled. This algorithm has important benefit for autonomous networks (i.e. without any centralized entity) but faces important security issues that need to be solved.
\end{itemize}
In phase 3, during the transmission of \gls{CSI} indicators in phase 1, a collision may occur when at least two users transmit their \gls{CSI} indicators at the same \gls{RE} (i.e. at least two users have the same quantized performance metric $\tilde{v}_n$). Phase 3 consists of applying some strategies that reduce the occurrence probability of such collisions. These procedure are detailed in section \ref{sec:RA_ProbCollision}.

\subsection{Performance analysis}
We denote by ${P}_{id}^{*}$ the time average of the power when an \textit{ideal} scheduling is considered in the sense that the network has the global instantaneous \gls{CSI} knowledge of the \gls{D2D} links. In this case, the \textit{ideal} scheduling achieves the optimum of problem (\ref{OP_Gen}) without any constraints on the number of resources available for \gls{CSI} feedback. When a collision free scenario is considered, we prove that the proposed distributed scheduling achieves a distance of at most $O\left( \frac{1}{V} \right)$ from the \textit{ideal} solution while ensuring the strong stability of the virtual queuing network with an average queue backlog of $O\left(V\right)$.

\begin{proposition}
\label{prop.Dist_bounds}
Assuming that no collision occurs; the distributed scheduling $\Gamma^{dist}\left( t\right)$ guarantees a time average power consumption that verifies the following:
\begin{equation}
\resizebox{1\hsize}{!}{${P}_{id}^{*} \leq \lim\limits_{T\rightarrow\infty}\sup\frac{1}{T}\sum\limits_{t=1}^{T}\sum\limits_{n=1}^{\mathrm{N}} \mathbb{E}\left[ P_{n}\left(\Gamma^{dist}\left(t\right)\right) \right] \leq {P}_{id}^{*}+\frac{C}{V}$}
\label{eq.boundPowDist}
\end{equation}
Where $C$ and $V$ are finite and the value of $V$ is tuned to make the time average power as close as desired to the ideal solution with a corresponding virtual queue size trade-off.
\end{proposition}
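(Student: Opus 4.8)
The plan is to follow the standard Lyapunov drift-plus-penalty machinery of \cite{neely2010stochastic}, mirroring the argument behind Propositions \ref{prop.StabCentQeuues} and \ref{prop.Cent_bounds}; the only essential change is that the collision-free distributed rule minimizes the per-slot term over \emph{all} $\mathrm{N}$ users rather than over a subset $\Lambda^{*}$, which is exactly why the reference point becomes the ideal $P_{id}^{*}$ instead of $P_c^{*}$. First I would introduce the quadratic Lyapunov function $L\left(t\right)=\tfrac{1}{2}\sum_{n=1}^{\mathrm{N}}Q_n\left(t\right)^2$ and the conditional one-slot drift $\Delta\left(t\right)=\mathbb{E}\left[L\left(t+1\right)-L\left(t\right)\mid\bm{Q}\left(t\right)\right]$. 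Squaring the queue recursion (\ref{eq.Q}) and using $\left(\left[Q-R\right]^{+}+R_{th}\right)^2\leq Q^2+R^2+R_{th}^2+2Q\left(R_{th}-R\right)$ yields the standard drift bound $\Delta\left(t\right)\leq C-\sum_n Q_n\left(t\right)\mathbb{E}\left[R_n\left(t\right)-R_{th}\mid\bm{Q}\left(t\right)\right]$, where $C$ is finite because the service rates are drawn from the bounded set $\left\{R_1,\dots,R_{\mathrm{M}}\right\}$ and the power from $\left[0,P_{max}\right]$. Adding the penalty $V\sum_n\mathbb{E}\left[P_n\left(t\right)\mid\bm{Q}\left(t\right)\right]$ to both sides gives the drift-plus-penalty inequality, whose right-hand side gathers the constant $C$, a $\sum_n Q_n\left(t\right)R_{th}$ contribution, and the decision-dependent term $\sum_{n=1}^{\mathrm{N}}\mathbb{E}\left[VP_n\left(t\right)-Q_n\left(t\right)R_n\left(t\right)\mid\bm{Q}\left(t\right)\right]$.

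The second step is to show that, absent collisions, $\Gamma^{dist}$ greedily minimizes exactly this decision-dependent term. Under overlay TDMA a single user transmits per slot, so the sum collapses to $v_{n^{*}}\left(t\right)$ for the scheduled pair, with $v_n$ as in (\ref{eq.vn}); by construction each pair locally selects the rate index $m^{*}$ attaining its own minimum, so the only remaining question is which $n$ is served. The Channel Indexing Feedback maps the smallest metric to the lowest-index \gls{RE}, and rule (\ref{eq.BUE_dist}) serves the user occupying that \gls{RE}; hence, when no two indicators collide, the served user is $\argmin_{n\in\{1,\dots,\mathrm{N}\}}v_n\left(t\right)$ --- precisely the choice the \emph{ideal} full-\gls{CSI} scheduler would make. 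I would note that the indexing actually uses the quantized metric $\tilde v_n$ of (\ref{eq.matching_n}), so I must argue that the discretization into $K^{\left(2\right)}$ levels preserves the relevant ordering, or equivalently absorb the residual quantization gap, bounded by the step $\tfrac{v_{max}-v_{min}}{K^{\left(2\right)}-1}$, into the constant $C$. This book-keeping is where I expect the only real friction, since everything else is standard.

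Third, I would invoke the existence of an optimal stationary randomized policy: Lyapunov optimization theory guarantees a channel-only policy $\Gamma^{\star}$, independent of the queues, attaining $\sum_n\mathbb{E}\left[P_n\left(\Gamma^{\star}\right)\right]=P_{id}^{*}$ while meeting $\mathbb{E}\left[R_n\left(\Gamma^{\star}\right)\right]\geq R_{th}$ for every $n$, since $P_{id}^{*}$ is the optimum of (\ref{OP_Gen}) with full \gls{CSI} and no feedback constraint. Because $\Gamma^{dist}$ minimizes the decision-dependent term over all feasible decisions, its value is no larger than that obtained by substituting $\Gamma^{\star}$, giving $\Delta\left(t\right)+V\sum_n\mathbb{E}\left[P_n\left(\Gamma^{dist}\right)\mid\bm{Q}\right]\leq C+VP_{id}^{*}+\sum_n Q_n\left(t\right)\left(R_{th}-\mathbb{E}\left[R_n\left(\Gamma^{\star}\right)\right]\right)\leq C+VP_{id}^{*}$, where the leftover $\sum_n Q_n\left(t\right)R_{th}$ merges with the rate term of $\Gamma^{\star}$ and the last inequality uses its rate feasibility.

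Finally I would telescope: taking total expectations, summing over $t=0,\dots,T-1$, dropping the nonnegative $\mathbb{E}\left[L\left(T\right)\right]$, dividing by $VT$ and letting $T\to\infty$ produces $\limsup_{T\to\infty}\tfrac{1}{T}\sum_{t=1}^{T}\sum_{n=1}^{\mathrm{N}}\mathbb{E}\left[P_n\left(\Gamma^{dist}\left(t\right)\right)\right]\leq P_{id}^{*}+\tfrac{C}{V}$, the upper bound of (\ref{eq.boundPowDist}). The lower bound needs no drift argument and is immediate: $P_{id}^{*}$ is by definition the infimum of the time-average power over \emph{all} admissible policies, including those with unconstrained global \gls{CSI}, so $\Gamma^{dist}$ --- being one such policy --- cannot beat it, yielding $P_{id}^{*}\leq\limsup_{T\to\infty}\tfrac{1}{T}\sum_{t=1}^{T}\sum_{n=1}^{\mathrm{N}}\mathbb{E}\left[P_n\left(\Gamma^{dist}\left(t\right)\right)\right]$.
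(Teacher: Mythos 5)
Your proposal is correct and takes essentially the same route as the paper's Appendix B: the quadratic Lyapunov drift-plus-penalty bound, the observation that the collision-free distributed rule minimizes the per-slot term $\sum_{n}\left[VP_{n}\left(t\right)-Q_{n}\left(t\right)R_{n}\left(t\right)\right]$ over all policies including the ideal full-CSI policy $\Gamma_{id}^{*}$, and then telescoping to obtain the bound ${P}_{id}^{*}+\frac{C}{V}$, with the lower bound following from the definition of ${P}_{id}^{*}$ as the optimum. The only "friction" you anticipate (the quantization step) is in fact a non-issue, and the paper treats it implicitly the way your first suggested option does: the map $v_{n}\mapsto\tilde{v}_{n}$ is monotone, and the no-collision hypothesis forces all $\tilde{v}_{n}$ to be distinct, so the user on the lowest-index resource is the \emph{exact} minimizer of $v_{n}$ and no residual gap needs to be absorbed into $C$.
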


\begin{proof}
See Appendix-\ref{proof.Dist_bounds} based on Lyapunov technique.
\end{proof}

We deduce that for a tuned finite value of $V$, the proposed distributed algorithm achieves the performance of the \textit{ideal} solution $P_{id}^*$ of the \gls{OP} (\ref{OP_Gen}). with an error of $O\left( \frac{1}{V} \right)$. For high finite values of $V$, the distributed scheduling reaches the \textit{ideal} performance when no collision occurs at the transmission level of the \gls{CSI} indicators. The process of reducing this collision are discussed in the coming section. 
\begin{remark}
In the centralized approach, the feedback sent by \gls{D2D} user $n$  to the \gls{BS} contains both its transmission power $\tilde{P}_n\left(t\right)$ and rate $R_n\left(t\right)$ at a given time-slot $t$. For the centralized algorithm, the \gls{BS} needs to acquire the values of these two information (rate and power) in order to update the value of the virtual queues $\bm{Q}\left(t\right)$ and deduce by that the subset of \gls{D2D} users $\Lambda^*$ that will send its feedback to the \gls{BS} at the next time-slot. However, in the distributed approach, only the value of the utility metric $v_n\left(t\right)$ of each \gls{D2D} user $n$ ($v_n\left(t\right)=VP_n\left(t\right)-Q_n\left(t\right)R_n\left(t\right)$) is required at a given time-slot $t$. The actual value of $P_n\left(t\right)$ and $R_n\left(t\right)$ are not necessary since each \gls{D2D} user can locally update the value of its virtual queue.
\end{remark}
\section{Probability of collision \label{sec:RA_ProbCollision}}
During the phase 2 of the distributed algorithm, the transmission of the \gls{CSI} indicators may suffer from a collision. It is crucial to note that this collision occurs at the level of the \gls{CSI} indicators' transmission and not at the level of the data transmission. A collision takes place when at least two users have the same quantized utility function $\tilde{v}_n$ and thus transmit their \gls{CSI} indicators at the same \gls{RE}. In particular, when a collision occurs at the level of the \gls{RE} of index $k^*$, identifying the optimal user to schedule is not possible anymore. We define the {overall collision} as the scenario where each user collides at least with another one; hence none of the users is scheduled at this time-slot. In order to avoid such scenario, two precautions detailed in the sequel, were adopted:
\begin{itemize}
\item The Lyapunov constant $V$ is chosen based on equation (\ref{eq.V_eps}) in order to minimize the collision probability.
\item The mapping, that matches the discrete values of the energy consumption metric of \gls{D2D} users with the $K^{\left(2\right)}$ available \gls{RE}, is updated as in equation (\ref{eq.Update_r_f}) in the aim of avoiding future collisions.
\end{itemize}
\subsection{Value of Lyapunov constant}
We describe how to limit the probability of feedback collision by choosing the appropriate value of the Lyapunov constant $V$. We consider that the value of the $V$ is updated within a periodicity of $T_p$ time-slots. We start by finding the analytic expression of the collision probability. A collision occurs at a given \gls{RE} when at least two users transmit their \gls{CSI} indicators at the same \gls{RE}. We call \textbf{probability of collision} $P_{c}$ as the probability of occurring an overall collision event where each \gls{CSI} indicator transmission collides with at least another one in such a way that none of the users is scheduled (i.e. none of the users has exclusively transmitted its \gls{CSI} indicator on one of the available \gls{RE}s).  In this part, we limit the analytic result to the single bit-rate case $\mathrm{M}=1$ (i.e. corresponding to bit-rate $\mathsf{R}$ and \gls{SNR} $\mathsf{S}$).
\begin{proposition}
\label{prop.Prob_Colli}
The probability of collision $P_c$ is given by:
\begin{equation}
\label{eq.Prob_Collision}
P_{c}=1-\sum_{i=1}^{\mathrm{N}}\sum_{j=1}^{K^{\left(2\right)}}\bar{p}_c \lbrace i,j \rbrace \prod_{k =1}^{j-1}\left(1-\sum_{l=1 \neq i}^{\mathrm{N}}\bar{p}_c \lbrace l,k \rbrace \right)
\end{equation}
\[\text{where }\bar{p}_c \lbrace i,j \rbrace=2\left[\exp\left(c_{i,j-1} \right)-\exp\left( c_{i,j}\right)\right] 
\]
\[
\times\prod_{k=1\neq i}^{\mathrm{N}}\left[1-2\exp\left( c_{k,j-1}\right)+2\exp\left( c_{k,j}\right)\right]
\]
\[
\text{and } c_{i,j}=-\frac{V\mathsf{S}N_{o}}{\left(a_{j}+Q_{i}\mathsf{R}\right)L_i}
\]
\end{proposition}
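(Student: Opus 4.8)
The plan is to compute the collision probability through its complement. An overall collision is precisely the event that \emph{no} resource element is occupied by a single user, so $P_c = 1 - P_{\mathrm{succ}}$, where $P_{\mathrm{succ}}$ is the probability that at least one \gls{RE} carries exactly one \gls{CSI} indicator, i.e.\ that some user is scheduled. I would decompose $P_{\mathrm{succ}}$ by which user is scheduled and on which \gls{RE}. Because the scheduled user is the unique occupant of the \emph{lowest-index} occupied \gls{RE}, the events ``user $i$ is scheduled on \gls{RE} $j$'' are pairwise disjoint over $(i,j)$; writing $\pi_{i,j}$ for the probability of such an event, $P_{\mathrm{succ}}=\sum_{i=1}^{\mathrm N}\sum_{j=1}^{K^{(2)}}\pi_{i,j}$. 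Two structural facts drive the computation: the indicator positions $\tilde k_1,\dots,\tilde k_{\mathrm N}$ are mutually independent, since each $\tilde k_n$ depends only on the pair's own channel $h_n$ and the channels are independent across pairs; and in the single-rate case $\mathrm M=1$ the metric $v_n$ is a monotone function of $|h_n|^2$, whose law is available in closed form.

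First I would compute the per-\gls{RE} landing probability $p_{i,j}:=P(\tilde k_i=j)$. With $\mathrm M=1$ we have $v_i = V P_i - Q_i\mathsf R$ and $P_i=\min\{\mathsf S N_o/(|h_i|^2 L_i),P_{max}\}$, so below the power cap $v_i$ is decreasing in $|h_i|^2$. Inverting the quantization cell that maps to \gls{RE} $j$ into an event on $|h_i|^2$ and using the Rayleigh assumption, namely $P(|h_i|^2>u)=e^{-u}$, turns a threshold of the form $v_i\le a_j$ into $|h_i|^2\ge V\mathsf S N_o/\big((a_j+Q_i\mathsf R)L_i\big)$ and hence into $\exp(c_{i,j})$, with $c_{i,j}$ as defined. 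The landing probability $p_{i,j}$ is then the mass of the cell for \gls{RE} $j$, i.e.\ a difference of consecutive terms $\exp(c_{i,\cdot})$; the multiplicative constant and the exact index pair follow from the rounding convention of (\ref{eq.matching_k}), and this is where the factor in front of $[\exp(c_{i,j-1})-\exp(c_{i,j})]$ and the treatment of the top cell (which absorbs the atom created by the $P_{max}$ saturation) are pinned down.

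Next I would assemble the single-\gls{RE} and lowest-index events from these landing probabilities. By user independence, the probability that user $i$ is the sole occupant of \gls{RE} $j$ factorizes as $\bar p_c\{i,j\}=p_{i,j}\prod_{k\ne i}(1-p_{k,j})$, reproducing the stated product form. To obtain $\pi_{i,j}$ I would further require that no lower-index \gls{RE} have a unique occupant; since user $i$ already sits on \gls{RE} $j$, any unique occupant of an \gls{RE} $k<j$ must be a \emph{different} user, so the relevant quantity is $\sum_{l\ne i}\bar p_c\{l,k\}$, the probability that \gls{RE} $k$ is singly occupied by some $l\ne i$, and the complementary factor is $1-\sum_{l\ne i}\bar p_c\{l,k\}$. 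Multiplying over $k=1,\dots,j-1$, summing over $(i,j)$, and passing to the complement then yields (\ref{eq.Prob_Collision}).

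The main obstacle is precisely the step that multiplies the factors over $k=1,\dots,j-1$: occupancies of distinct resource elements are \emph{not} independent, because the same $\mathrm N$ users populate every \gls{RE}, and conditioning on ``user $i$ on \gls{RE} $j$'' reshapes the distribution of the remaining users over the lower \gls{RE}s. Turning the nested product into an equality therefore rests on an independence-across-\gls{RE}s argument, together with reconciling that each factor $\bar p_c\{l,k\}$ still contains the term $(1-p_{i,k})$ even though user $i$ can no longer occupy \gls{RE} $k$ once placed on \gls{RE} $j$. Controlling this coupling cleanly — and handling the atom in the law of $v_n$ produced by the $P_{max}$ cap — is the delicate part of the argument; the rest (inverting the quantization cells and collecting the exponential terms) is routine algebra.
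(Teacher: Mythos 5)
Your proposal is essentially the paper's own proof. Appendix C proceeds exactly as you do: it writes $P_c$ as one minus the probability that some user is scheduled, decomposes that event over the scheduled user $i$ and its resource element $j$, computes the per-user landing probabilities from the exponential law of $|h_i|^2$ (your $p_{i,j}$ is the paper's $\mathbb{P}(A_{i,j})$, including the factor $2$ and the convention $a_{K^{(2)}+1}=+\infty$ that absorbs the $P_{max}$ saturation), forms $\bar p_c\{i,j\}=\mathbb{P}(A_{i,j})\prod_{k\neq i}\mathbb{P}(\bar A_{k,j})$ by independence across users, and then multiplies the factors $\bigl(1-\sum_{l\neq i}\bar p_c\{l,k\}\bigr)$ over $k=1,\dots,j-1$ before summing and taking the complement.

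The obstacle you flag at that last step is genuine, but you should know that the paper does not resolve it either: Appendix C simply asserts $\mathbb{P}\{j=\text{lowest RE index without collision}\}=\prod_{k=1}^{j-1}\bigl(1-\sum_{l\neq i}\bar p_c\{l,k\}\bigr)$, i.e., it treats sole-occupancy events on distinct REs as if they were independent, and it keeps the factor $\mathbb{P}(\bar A_{i,k})$ inside each $\bar p_c\{l,k\}$ even though user $i$ has already been placed on RE $j$ and therefore cannot occupy RE $k<j$. Both couplings you identify (the same $\mathrm{N}$ users populate every RE, so occupancies of distinct REs are dependent; and the spurious $(1-p_{i,k})$ term) are silently ignored in the paper. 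Consequently the stated identity is exact only up to this independence approximation; your reconstruction matches the paper's argument step for step, and the part you honestly could not close rigorously is a gap in the paper's own proof rather than a defect of your approach.
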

\begin{proof}
See Appendix-\ref{proof.Prob_Colli}.
\end{proof}
Based on the expression (\ref{eq.Prob_Collision}), we can tune the value of the Lyapunov constant $V$ in order to limit the overall collision probability to a small $\epsilon$.
  \begin{theorem}
 \label{eq.Th_PC2_prime}
The probability $P_c$ is bounded by a given $\epsilon$ (with $0 \leq \epsilon \leq 1$) when the value of the Lyapunov constant $V$ is given by:\\
\begin{equation}
\label{eq.V_eps}
 V\left( \epsilon \right)=-\frac{R_{th}\mathsf{R}\ln\left(\epsilon'\right)T_{p}}{P_{max}\ln\left(\epsilon'\right)+\mathsf{S} N_{0}L_{min}^{-1}}
\end{equation}
\[\text{where } \epsilon':=\frac{1}{2\mathrm{N}}\left[1-\left(\frac{1-\epsilon}{\mathrm{N}K^{\left(2\right)}}\right)^{\frac{1}{\mathrm{N}+K^{\left(2\right)}}}\right]
\]
and $L_{min}$ is the path-loss over a \gls{D2D} link with $d_{min}$ as the smallest distance allowed between a \gls{D2D} pair.
 \end{theorem}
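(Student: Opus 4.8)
The plan is to start from the exact collision probability in Proposition~\ref{prop.Prob_Colli} and reduce the requirement $P_c\le\epsilon$ to a single, uniform smallness condition on the elementary tail probabilities $\exp(c_{i,j})$, which can then be inverted for $V$. In the single-rate case each $\exp(c_{i,j})$ is essentially the tail probability $\Pr\!\big(|h_i|^2\ge \tfrac{V\mathsf{S}N_{0}}{(a_j+Q_i\mathsf{R})L_i}\big)$ of the exponential variable $|h_i|^2$, so it is decreasing in $V$ and governs how likely user $i$ is to land on resource element $j$. I would therefore aim to show that imposing $\exp(c_{i,j})\le\epsilon'$ for every $i$ and $j$ already forces $P_c\le\epsilon$, and then solve the worst-case instance of $\exp(c_{i,j})\le\epsilon'$ for $V$.

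First I would establish the combinatorial reduction. Writing the success probability as $1-P_c=\sum_{i=1}^{\mathrm N}\sum_{j=1}^{K^{(2)}}\bar{p}_c\{i,j\}\prod_{k=1}^{j-1}\big(1-\sum_{l\ne i}\bar{p}_c\{l,k\}\big)$, I would bound each elementary factor under the hypothesis $\exp(c_{\cdot,\cdot})\le\epsilon'$. Each ``no competing user on element $j$'' factor satisfies $1-2\exp(c_{k,j-1})+2\exp(c_{k,j})\ge 1-2\epsilon'$, and there are $\mathrm N-1$ of them inside each $\bar{p}_c\{i,j\}$; likewise $\sum_{l\ne i}\bar{p}_c\{l,k\}\le 2\mathrm N\epsilon'$, so each of the at most $K^{(2)}-1$ outer factors is $\ge 1-2\mathrm N\epsilon'$. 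Collecting the $\mathrm N K^{(2)}$ index pairs together with the two groups of factors yields a bound of the form $1-P_c\ge \mathrm N K^{(2)}\,(1-2\mathrm N\epsilon')^{\mathrm N+K^{(2)}}$. Requiring the right-hand side to be at least $1-\epsilon$ and solving for $\epsilon'$ reproduces exactly the stated $\epsilon'=\tfrac{1}{2\mathrm N}\big[1-(\tfrac{1-\epsilon}{\mathrm N K^{(2)}})^{1/(\mathrm N+K^{(2)})}\big]$, where the exponent $\mathrm N+K^{(2)}$ is precisely the sum of the per-user and per-element factor counts.

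It then remains to turn $\exp(c_{i,j})\le\epsilon'$ into an explicit value of $V$. Since $c_{i,j}=-\tfrac{V\mathsf{S}N_{0}}{(a_j+Q_i\mathsf{R})L_i}$ is least negative, hence $\exp(c_{i,j})$ largest, when the denominator $(a_j+Q_i\mathsf{R})L_i$ is largest, I would bound this denominator by its worst case over $i$, $j$ and over the $T_p$-slot update window: $a_j\le v_{max}=VP_{max}-R_{th}\mathsf{R}$ (from (\ref{eq.vmax}) with $r=1$, $f=0$, $\mathrm M=1$), the backlog $Q_i\le R_{th}T_p$ (the virtual queue grows by at most $R_{th}$ per slot), and $L_i$ replaced by $L_{min}$. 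This collapses everything to the single condition $\exp\!\big(-\tfrac{V\mathsf{S}N_{0}}{(VP_{max}+R_{th}\mathsf{R}T_p)L_{min}}\big)\le\epsilon'$. Taking logarithms and using that $v_{max}$ is affine in $V$, the inequality becomes linear in $V$; solving it at equality isolates $V$ and delivers the stated $V(\epsilon)$.

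The step I expect to be the main obstacle is the combinatorial reduction of the second paragraph: controlling the nested sum--product of Proposition~\ref{prop.Prob_Colli} tightly enough to extract the clean factorized bound $\mathrm N K^{(2)}(1-2\mathrm N\epsilon')^{\mathrm N+K^{(2)}}$, since the factors are neither independent nor identical and the telescoping over $j$ interacts with the outer product over lower-indexed elements. A secondary difficulty is justifying the worst-case substitution in the last step: bounding the backlog uniformly over the update window, checking monotonicity of $\exp(c_{i,j})$ in each parameter so that the selected $(i,j)$ is genuinely extremal, and verifying that the resulting $V(\epsilon)$ is positive and well defined for the admissible range of $\epsilon$ (note that as $V\to\infty$ the exponent $c_{i,j}$ saturates, so $\epsilon'$ cannot be driven to $0$, which constrains how small $\epsilon$ may be chosen).
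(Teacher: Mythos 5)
Your proposal follows essentially the same route as the paper's own proof: the same reduction of $P_c\le\epsilon$ to a uniform smallness condition $\exp(c_{i,j})\le\epsilon'$ on the worst-case tails, the same factorized bound $1-P_c\ge \mathrm{N}K^{(2)}\left(1-2\mathrm{N}\epsilon'\right)^{\mathrm{N}+K^{(2)}}$ whose inversion gives the stated $\epsilon'$, and the same worst-case substitutions ($a_j\le v_{max}$, $Q_i\le R_{th}T_p$, $L_i$ replaced by $L_{min}$) to turn the tail condition into a linear-in-$V$ inequality and solve for $V(\epsilon)$. The combinatorial step you flag as the main obstacle is precisely what the paper's Step 1 supplies, namely the uniform lower bounds $\mathbb{P}\left(A_{i,j}\right)\ge\mathbb{P}\left(B_{i,K^{(2)}-1}\right)$ and $\mathbb{P}\left(\bar{A}_{i,j}\right)\ge\mathbb{P}\left(B_{i,K^{(2)}-1}\right)$, which allow every factor of $\bar{p}_c\lbrace i,j\rbrace$ (including the leading one, which your sketch leaves uncontrolled) to be bounded below by $1-2\epsilon'$ before collecting the $\mathrm{N}K^{(2)}$ terms.
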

 
\begin{proof}
See Appendix-\ref{proof.Th_PC2_prime}.
\end{proof}
\subsection{Updating the mapping}
If a collision occurs then the granularity of the mapping set $\mathbb{S}_v$ is not sufficient for proposing different \gls{CSI} indicators to describe the energy consumption metric of the different \gls{D2D} pairs. Hence, an accuracy improvement of the mapping set $\mathbb{S}_v$ is done in order to avoid future collisions. We denote by $c$ the smallest \gls{RE}'s index where a collision has occurred (with $1\leq c \leq K^{\left(2\right)}$ ). The mapping $\mathbb{S}_v$ between the quantized value of the energy consumption metric and the $K^{\left(2\right)}$ available \gls{RE}s is updated by modifying the parameters $r$ and $f$ of the $v_{max}$ formula (\ref{eq.vmax}) as follows: 
\begin{equation}
\label{eq.Update_r_f}
\begin{aligned}
& r=c {\text{ ; if  }\left( r<K^{\left(2\right)}\right)} &  f=f+1\\
& \text{else  } \left( r==K^{\left(2\right)}\right) & f=0
\end{aligned}
\end{equation}
When collision occurs at the \gls{RE} of index $c$ with $c < K^{\left(2\right)}$ then the mapping update aims to reduce the probability of collision by taking the following actions: (i) $v_{max}=a_c$ to reduce the interval $\left[v_{min},v_{max}\right]$ and (ii) $f=f+1$ to increase the granularity of the intervals within the new subset $\mathbb{S}_v$. However, when collision occurs at the \gls{RE} with the highest index ($c=K^{\left(2\right)}$), this means that all the \gls{D2D} users have a utility function higher than the current $v_{max}$. Thus, the mapping update aims to reduce the probability of collision by enlarging the interval $\left[v_{min},v_{max}\right]$ (with $r=K^{\left(2\right)}$ and $f=0$).

\section{Implementation \label{sec:RA_Implementation}}
We show how the proposed centralized and distributed approaches can be implemented in real networks. To do so, we consider the example of \gls{LTE} network. We focus on how the existing \gls{PUCCH} formats can be modified in order to support the proposed algorithms. 

\subsection{Existing feedback Standardization} 
In this work, we benefit from the existing \gls{DL} feedback standards in \gls{LTE} specifications developed by \gls{3GPP} (i.e. one can refer to \cite{TS36213} for more details). The \gls{PUCCH} appears mainly in two formats depending on the type of the handled information: (i) the formats 1,1a,1b of 1-2 encoded bits and which are dedicated for the ACK/NACK feedback and (ii) the formats 2,2a,2b of 20-22 encoded bits and which are mainly used for \gls{CSI} feedback. The \gls{CSI} feedback consists of three component: (i) the \gls{RI}, the \gls{PMI} and the \gls{CQI}. The preferred triplet \gls{CSI} (\gls{RI}/\gls{PMI}/\gls{CQI}) is computed by each user based on its instantaneous channel estimations obtained from \gls{DL} pilot. In the aim of reducing the complexity of such computation, several algorithms have been proposed in the literature (e.g. \cite{Singh2013}).

As mentioned before, the \gls{BS} may improve the performance of \gls{D2D} communications by acquiring the \gls{CSI} of \gls{D2D} links. The reporting of these \gls{CSI} can be performed in two different ways: (i) periodic \gls{CSI} report (summarized version and economical in terms of radio resources) and (ii) aperiodic \gls{CSI} report (detailed version and costly in terms of radio resources). By default, the periodic \gls{CSI} reporting is done on the \gls{PUCCH}. However, when the user's data are planned to be transmitted on the \gls{PUSCH} then the \gls{CSI} reporting is multiplexed with the data and sent on the \gls{PUSCH}. Meanwhile, the aperiodic \gls{CSI} reporting is exclusively transmitted on the \gls{PUSCH} after the reception of a \gls{BS} request via a \gls{PDCCH} that carries a \gls{DL} \gls{DCI} of format 0. 

The periodic \gls{CSI} feedback, sent on \gls{PUCCH} resources, are configured based on a semi-static scheduling. This configuration is specifically assigned to each \gls{UE} via radio resource control \gls{RRC} in order to avoid the need of increasing the size of \gls{PDCCH}. When semi-static scheduling is deployed, the \gls{BS} pre-configures each user with a given resource allocation identifier and periodicity. The limited amount of \gls{PUCCH} resources prohibits the transmission of all the users' \gls{PUCCH} at each time-slot (i.e. refers to \gls{TTI} in \gls{LTE}) and obliges the possible \gls{CSI} reporting to include only the necessary information and not the detailed one. This limitation motivates us to propose a new way to manage these critical \gls{PUCCH} resources and enable by that an energy aware scheduling. 

The performance of \gls{DL} communications depends on the precision of the \gls{CSI} feedback which is function of the frequency granularity. However, the more accurate the \gls{CSI} feedback is the more the \gls{UL} feedback load is important. Different \gls{CSI} reporting mode on \gls{PUCCH} resources are defined as function of the trade-off existing between the \gls{DL} performance and the \gls{UL} load: (i) mode 1: wideband \gls{CQI} report where a single \gls{CQI} value corresponds to the entire system bandwidth and (ii) mode 2: subband \gls{CQI} report where the system bandwidth is divided into multiple subbands with different \gls{CQI} value for each subband. In this work, we limit the implementation section to the case of wideband \gls{CQI} report. Please note that considering the case of subband \gls{CSI} reporting is a straightforward process (i.e. in this case, channel indexing feedback technique is applied respectively for each subband \gls{CSI}).

Cellular networks are designed based on a centralized approach where the \gls{BS} presents the entity that controls the operations and guarantees the quality of cellular communications. The \gls{BS} will start by configuring the sub-frames corresponding to the \gls{CSI} feedback via the identification of the bandwidth of the \gls{PUCCH} region, the period of feedback and the cyclic shifting (i.e. permitting the time multiplexing between the \gls{CSI} reports of different \gls{UE}s and/or between the \gls{CSI} reports of the same \gls{UE}). Depending on these parameters, each user transmits periodically its \gls{CSI} feedback. However, due to the restricted number of \gls{PUCCH} resources, only a subset of users, i.e. not the totality of the users, will transmit their \gls{CSI} feedback at a given \gls{TTI}. Depending on the received \gls{CSI} feedback, the \gls{BS} runs its scheduling algorithm in order to allocate \gls{D2D} resources. This is the baseline protocol to which we will compare the proposed scheduling algorithms.

\subsection{Centralized algorithm implementation}
Recall that we consider user scheduling in such a way that only the optimal user is scheduled at a given \gls{TTI} to transmit its data overall the available \gls{D2D} resources. For the centralized approach, we suppose that the users send their \gls{CSI} by the use of \gls{PUCCH} format 2b. The control information contained in this \gls{PUCCH} format (\gls{CQI} and 2-bits for ACK or NACK) will be modified as follows: \gls{CQI} will remain intact however the 2-bits of ACK-NACK will be used for indicating the \gls{D2D} users' transmission. Thus, users' transmission power will be quantized by mapping their continuous values to a countable small set of 4 elements: $\left\lbrace \tilde{P}_1, \tilde{P}_2, \tilde{P}_3, \tilde{P}_4\right\rbrace$. Based on these modifications, the three phases centralized algorithm (detailed in algorithm \ref{Algo_Cent}) can be implemented.

When the \gls{PUCCH} format 2b is adopted for \gls{CSI} feedback, then based on \cite{bouguen2012lte} we can deduce the number $K^{\left(1\right)}$ of  \gls{CSI} feedback that can be simultaneously supported. As shown in figure \ref{fig.PUCCH_RB},           $K^{\left(1\right)}$ is equal to the product of the two following identifiers (i.e. communicated by the \gls{BS} via \gls{RRC}):
\begin{itemize}
\item $N_{RB}$ that indicates the number of \gls{RB}s that are available per \gls{TTI} for \gls{CSI} feedback of a \gls{PUCCH} 2/2a/2b format. It can be a configurable parameter such that the \gls{BS} can control the \gls{UL} bandwidth and can eventually dimension the size of these resources depending on the need.
\item $C^{\left(1\right)}_{MUX}$ that indicates the multiplexing capacity per \gls{RB}. This corresponds to the number of users that can send their \gls{CSI} feedback on the same \gls{RB}. Giving that this parameter depends only on the cyclic shifting of the base sequences (which is a fix value in this case) then $C^{\left(1\right)}_{MUX}=12$.
\end{itemize}
We deduce that $K^{\left(1\right)}=N_{RB} \times C^{\left(1\right)}_{MUX} = 12N_{RB}$.
\begin{figure}[ptb]
\begin{centering}
\includegraphics[scale=0.4]{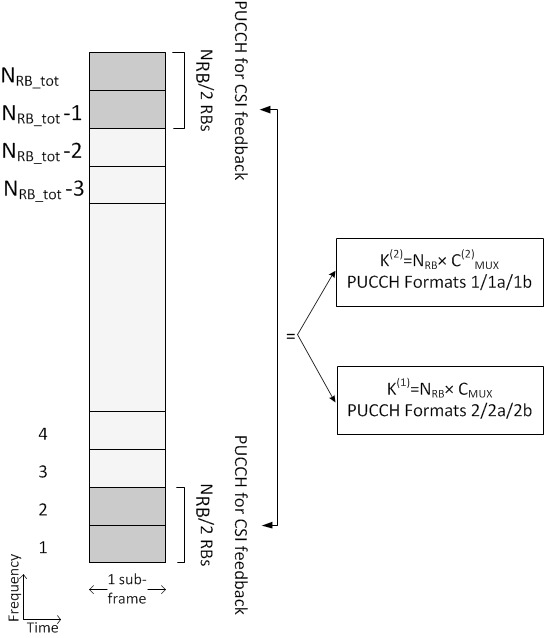}
\par\end{centering}
\caption{The resource blocks allocated for users' feedback (for both \gls{PUCCH} Format 1 and 2)}
\label{fig.PUCCH_RB} 
\end{figure}

\subsection{Distributed algorithm implementation}
In the distributed approach, the resources available for \gls{CSI} feedback are handled in a new distinct way in order to guarantee that the optimal user, at each \gls{TTI}, sends its \gls{CSI} indicator. Based on existing standards, we consider that the \gls{CSI} indicators consists of \gls{PUCCH} format 1/1a/1b which are commonly used for ACK/NACK messages or scheduling request. These control information correspond to 1 or 2 encoded bits per \gls{TTI}. Indeed, users transmit their \gls{CSI} indicators as \gls{PUCCH} format 1/1a/1b on the \gls{RE} that corresponds to their quantized energy consumption metric.

Based on \cite{bouguen2012lte}, we can deduce the number $K^{\left(2\right)}$ of \gls{PUCCH} format 1/1a/1b that can be simultaneously supported at a given time-slot and thus can be used for the {\textit{Channel Indexing Feedback}} technique.  As shown in figure \ref{fig.PUCCH_RB}, $K^{\left(2\right)}$ is equal to the product of the two following identifiers communicated by the \gls{BS} via \gls{RRC} (i.e. $K^{\left(2\right)}=N_{RB}\times C_{MUX}^{\left(2\right)}$):
 
\begin{itemize}
\item $N_{RB}$: indicates the number of \gls{RB}s available per \gls{TTI} for the transmission of \gls{PUCCH} format 1/1a/1b. It can be a configurable parameter such that the \gls{BS} can control the \gls{UL} bandwidth and eventually dimension the size of these resources depending on the need.
\item $C_{MUX}^{\left(2\right)}$ as the multiplexing capacity per \gls{RB}, which means the number of \gls{PUCCH} format 1/1a/1b that can be transmitted on the same \gls{RB}. This parameter depends on both parameters: (i) the number of possible orthogonal codes $N_{OC}$ (e.g. three for normal cyclic prefix and two for extended cyclic prefix) and (ii) the difference $\Delta_{shift}^{PUCCH}$ between two consecutive cyclic shifting for resources using the same orthogonal code. Then, $C_{MUX}^{\left(2\right)}=12N_{OC} / \Delta_{shift}^{PUCCH}$.
\end{itemize}

Using \gls{PUCCH} format 1/1a/1b, the procedure of the distributed scheduling given in Algorithm \ref{Algo_Dist} is described as follows. The \gls{BS} initiates the scheduling by announcing the constants used by the users to compute their energy consumption metric (e.g. the Lyapunov constant $V$, update duration $T_{p}$, throughput threshold $R_{th}$). At each time-slot $t$, each user $n$ computes the couple $P_n\left(t\right)$ and $R_n\left(t\right)$ that minimizes its utility function $v_n\left(t\right)$ based on its instantaneous estimated \gls{D2D} channel $h_n\left(t\right)$. Applying equations (\ref{eq.vmin}) and (\ref{eq.vmax}), the users can locally compute the values of $v_{min}\left(t\right)$ and $v_{max}\left(t\right)$ in order to deduce the mapping set $\mathbb{S}_v$. Each \gls{D2D} user deduces the quantized value of its utility function $\tilde{v}_{n}$ from equation (\ref{eq.matching_n}) as well as its corresponding \gls{RE} index $\tilde{k}_n$ from equation (\ref{eq.matching_k}). Each \gls{D2D} user $n$ sends a \gls{CSI} indicator, as a \gls{PUCCH} format 1b, on the $\tilde{k}_n^{th}$ \gls{RE}. Considering \gls{BS}-assistance approach of the distributed algorithm, \gls{BS} decides to schedule the user that exclusively transmits its \gls{CSI} indicator at the \gls{RE} of the lowest index. This is equivalent to scheduling the user that minimizes the utility function $v_n \left(t \right)$ at time-slot $t$ and minimizes by that the energy consumption of \gls{D2D} network. If a feedback collision occurs at some level of the \gls{RE}s, then the \gls{BS} transmits the smallest index $c$ of \gls{RE}s where collision has occurred. The broadcasting of these scheduling information can be done via \gls{PDCCH} that carries a \gls{DCI} of formats 1A or 1C. Depending on the collision index received from the \gls{BS}, users update their parameters $r$ and $f$ as in equation (\ref{eq.Update_r_f}) in order to increase the granularity of the intervals within the set $\mathbb{S}_v$ and reduce by that the probability of future collision. 

\section{Numerical Results \label{sec:RA_NumResults}}

We summarize the numerical settings of this section in the table \ref{Table.Settings}. The performance are evaluated by averaging over $100$ different \gls{UE}s realizations. We estimate the value of the Lyapunov constant $V$ based on equation (\ref{eq.V_eps}) and we find $V=10^{15}$ for $\epsilon=0.1$, $\mathsf{S}=80$ dB,  $\mathsf{R}=700$ kbps/RB and $R_{th}=500$ kbps/RB for $\gamma_{th}=14$dB. In \gls{LTE}, there are 15 different values for \gls{CQI} (i.e. mapping between \gls{CQI} and modulation etc...). Hence, we suppose the existence of $15$ different bit-rates that could be applied for link adaptation model (i.e. $\mathrm{M}=15$). From an internal link-level simulator we deduce a throughput-\gls{SNR} mapping for a $10\,$MHz \gls{E-UTRA} \gls{TDD} network. This mapping gives practical values of the bit-rates $\lbrace \mathsf{R}_1, ...,  \mathsf{R}_{\mathrm{15}}\rbrace$ as well as their corresponding \gls{SNR} values $\lbrace \mathsf{S}_1, ...,  \mathsf{S}_{\mathrm{15}}\rbrace=\lbrace 0, 1, ...,  14\rbrace $ dB. 
\begin{figure}[H]
\begin{centering}
\includegraphics[scale=0.8]{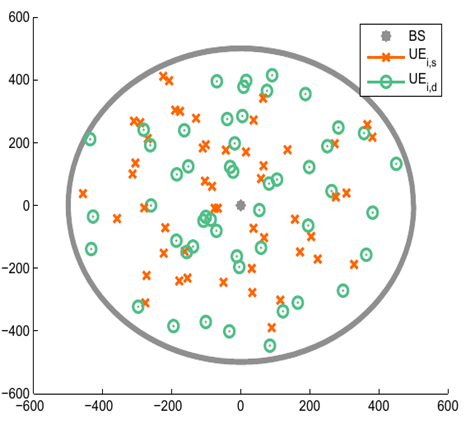}
\par\end{centering}
\caption{Uniform Random Localization of $N=50$ pairs of \gls{D2D}}
\label{fig.NR_Scenario} 
\end{figure}

\begin{table}[H]
\begin{center}
\begin{tabular}{|c|c|}
\hline
\textbf{Parameter} & \textbf{Value} \\
\hline
{Cell Radius $R_c$} & {500m} \\
\hline
{Bandwidth} & {10MHz} equivalent to $50$ \gls{RB} \\
\hline
{\gls{UE} drop} & $N=50$ \gls{UE}s\\
& Random Uniform drop with \\
 & $d_{min}=3$ m , d$_{max}=350$ m \\
&  from \cite{3GPP_ProSe} and \cite{Qualcomm} \\
&  e.g. Fig. \ref{fig.NR_Scenario}\\
\hline
Feedback  &  $N_{RB}=2$, $\Delta_{shift}^{PUCCH}=1$, $N_{OC}=3$  \\
Parameters  &  $\Rightarrow K^{\left(1\right)}=24$ and $K^{\left(2\right)}=72$  \\
\hline
 $P_{max}$ & $250$ mW  \\
\hline
Quantized  &  $\tilde{P}_1=50$, $\tilde{P}_2=100$ mW \\
Powers  &  $\tilde{P}_3=150$, $\tilde{P}_4=200$ mW \\
\hline
Path-loss & outdoor-to-outdoor path-loss in\\
& Channel Models section of \cite{3GPP_ProSe}\\
\hline
Simu. Settings  &  $100$ realizations of $T_p=10^6$ ms each\\
\hline
Noise density & $-174$ dBm/Hz\\
\hline
\gls{D2D} Noise Figure & $9$ dB\\
\hline
\end{tabular}
\caption{Numerical Settings}
\label{Table.Settings}
\end{center}
\end{table}

$\mathrm{N}$ \gls{D2D} pairs are uniformly distributed in a cell of radius $R_c$. The scheduling scheme determines how these \gls{D2D} communications access the \gls{D2D} resources in the network. The performance of the proposed algorithms is evaluated by comparing the time average of the users' energy consumption and energy efficiency between the following different algorithms: 
\begin{itemize}
\item \textbf{Centralized-limited feedback} scheduling: proposed in section \ref{sec:RA_Cent}.
\item \textbf{Distributed} scheduling: proposed in section \ref{sec:RA_Dist}.
\item \textbf{Ideal} scheduling: \gls{BS} has the global knowledge of the instantaneous channel states of all the \gls{D2D} links.
\item \textbf{Round-Robbin} scheduling: each subset $\Lambda$ of users is chosen in equal portions of time and in a circular order for the transmission of their \gls{CSI} feedback using \gls{PUCCH} format 2/2a/2b. The number of users that can send simultaneously their \gls{CSI} feedback $K^{\left(1\right)}$ depends on the number of resource blocks available for feedback transmission. 
\end{itemize}
Fig. \ref{fig.logEC} shows how the proposed distributed and centralized approaches reduce the time average of the users' transmitted power compared to the Round-Robbin approach for different \gls{SNR} thresholds $0 \leq \gamma_{th} \leq 14$ dB. The reduction is up to $70\%$ for the centralized approach and up to $98\%$ for the distributed approach. The proposed distributed algorithm outperforms the centralized one that suffers from a limited number of users that simultaneously transmit their \gls{CSI} feedback to the \gls{BS}.The distributed algorithm proposes a new way to manage the limited resources available for feedback transmission. Thus, all the \gls{D2D} users benefit from their local \gls{CSI} knowledge and limit their feedback transmission to a small \gls{CSI} indicator. This new channel indexing feedback technique guarantees the scheduling of the optimal user. Nevertheless, the distributed scheme does not achieve the ideal one as collision may occur. Even though a collision probability of $0.1$ occurs, the distributed algorithm highly reduces the  users' transmission power.

In the aim of verifying that the throughput constraint is ensured while minimizing the users' transmission power, we study the \gls{EE} of the proposed algorithms. The \gls{EE} metric is defined as the ratio of the total throughput to the total transmitted power over all the simulation duration (see \cite{Chen2010}). Figure \ref{fig.EE} represents the evolution of the \gls{EE} of the proposed algorithms as function of the \gls{SNR} threshold $\gamma_{th}$. These results show an important enhancement of the network \gls{EE} and underline the performance of the distributed algorithm compared to the other non \textit{ideal} scheduling.

\begin{figure}
\begin{centering}
\includegraphics[scale=0.4]{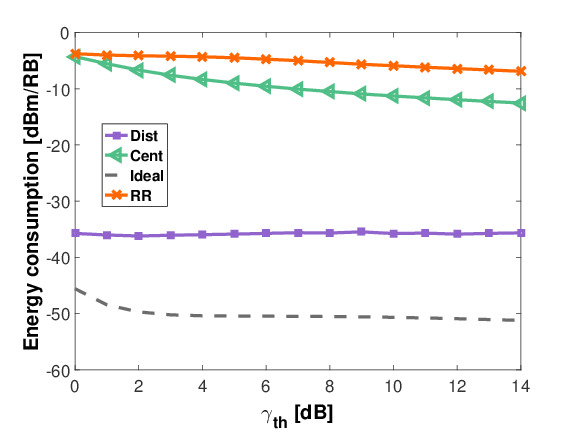}
\par\end{centering}
\caption{EC as function of the \gls{SNR} threshold $\gamma_{th}$}
\label{fig.logEC} 
\end{figure}

\begin{figure}
\begin{centering}
\includegraphics[scale=0.4]{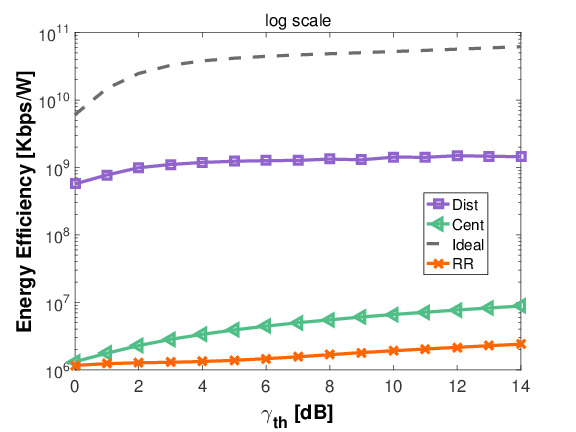}
\par\end{centering}
\caption{EE as function of the \gls{SNR} threshold $\gamma_{th}$}
\label{fig.EE} 
\end{figure}

\section{Conclusion\label{sec:RA_Conclusion}}
In paper, we consider \gls{D2D}-enabled cellular network with limited feedback and we propose a channel adaptive resource allocation algorithms that minimize the \gls{D2D} transmission power under throughput constraints. Based on Lyapunov optimization, we start by considering a centralized channel adaptive scheduling where \gls{D2D} users send their \acrfull{CSI} and transmission power information to the base station. Since limited resources are available for \gls{CSI} transmission, only a subset of \gls{D2D} users are able to report their \gls{CSI} at a given time-slot and the performance of centralized approaches is limited by the resources available for feedback exchange. Therefore, we develop a distributed approach where each \gls{D2D} user benefits from its local channel state knowledge to share a short \gls{CSI} indicator that is sufficient for deducing its corresponding energy consumption metric. Some mechanisms are deployed to avoid the collision that may occur during the \gls{CSI} indicators transmission. When feedback collision is avoided, the distributed approach achieves energy consumption performance comparable to the ideal solution where the base station has a global \gls{CSI} knowledge of all the \gls{D2D} users. Furthermore, we have discussed how both distributed and centralized algorithms can be simply implemented in existing \acrfull{LTE} standards. Numerical results show that the distributed algorithm outperforms the centralized one. Compared to a round-robin scheduling, the proposed algorithms reduce the energy consumption  between $70\%$ to $98\%$ and enhance the energy efficiency of \gls{D2D} networks.

\appendix
\subsection*{A. Proof of proposition \ref{prop.StabCentQeuues} and \ref{prop.Cent_bounds} \label{proof.Cent_bounds}}
We denote by $\Gamma^{cent}$ the proposed centralized scheduling and by $\Gamma_{c}^*$ the \textit{best} centralized scheduling that optimizes OP (\ref{OP_Gen}) in a limited-feedback \gls{D2D} network. In this centralized scenario, we assume that the \gls{BS}  has the global statistical knowledge of the \gls{D2D} \gls{CSI} as well as the instantaneous \gls{CSI} of only a subset of \gls{D2D} users. The number of users in this subset $K^{\left(1\right)}$ depends on the number of resources  available for \gls{CSI} reporting.

Lyapunov technique (see \cite{GergiadisNeely2006}) is applied for proving proposition \ref{prop.Cent_bounds}. This statement evaluates the performance of the centralized algorithm $\Gamma^{cent}$ compared to that of the \textit{best} centralized limited-feedback scheduling $\Gamma_c^{*}$  in terms of time average of the users' power consumption. In order to solve the optimization problem (\ref{OP_Gen}), the virtual queues  (\ref{eq.Q}) were introduced:
\[
Q_{n}\left(t+1\right)=\left(Q_{n}\left(t\right)-R_{n}\left(t\right)\right)^{+}+R_{th} 
\]

From queuing theory \cite{neely2010stochastic}, it is known that the strong stability of the queues means that the average arrival rate is less than the average departure rate. Thus, we can establish the equivalence between the throughput constraint of the optimization problem and the strong stability of the system of virtual queues. Therefore, our goal is to minimize the energy consumption of \gls{D2D} communications while guaranteeing the stability of the virtual queues. We base our proof on Lyapunov optimization theory and define the Lyapunov function $L_{\bm{Q}}\left( t \right)$ as follows:
\[
L_{\bm{Q}}\left( t \right) :=\frac{1}{2} \sum\limits_{i=1}^\mathrm{N}Q_i^2\left(t\right)
\]
We denote by $\bm{Q}\left( t \right)=\left( Q_1\left( t\right), ..., Q_N\left( t\right) \right)$  the vector of the current virtual queue backlogs and by $\bm{H}\left(t\right)=\left[h_1, h_2, ..., h_N \right]$ the vector of channel states. Applying the methodology in \cite{GergiadisNeely2006} gives that the drift-plus-penalty expression is upper bounded by the following:
\[
\Delta \left( \bm{Q}\left( t \right)\right) :=\Delta \left( \bm{L}\left( t \right)\right) + V\sum\limits_{i=1}^\mathrm{N}  \mathbb{E}\left[ P_i\left( t\right)| {\bm{Q}}\left( t \right)\right] 
\]
\[ \resizebox{1\hsize}{!}{$
\leq C+\sum\limits_{i=1}^{\mathrm{N}}Q_{i}\left(t\right)R_{th} -\mathbb{E}\left[\sum\limits_{i=1}^{\mathrm{N}}Q_{i}\left(t\right)R_{i}\left(\Gamma^{cent}\left(t\right),\bm{H}\left(t\right)\right)|{\bm{Q}}\left( t \right)\right]$}
 \]
\begin{equation}
\label{eq.LyapDrift_cent}
+V\mathbb{E}\left[\sum_{i=1}^{\mathrm{N}}P_{i}\left(\Gamma^{cent}\left(t\right),\bm{H}\left(t\right)\right)|{\bm{Q}}\left( t \right)\right]
\end{equation}
 where $\Gamma^{cent}\left(t \right)$ is the proposed centralized scheduling policy and C is a finite constant such that 
 \[
 \mathbb{E}\left[\sum \limits_{i=1}^{\mathrm{N}}\frac{R_{th}^{2}+R_{i}^{2}\left(t\right)}{2}|{\bm{Q}}\left( t \right)\right] \leq C\]

At each time-slot $t$, the proposed algorithm uses the global statistical \gls{CSI} to compute the subset $\Lambda^*$  of users that will simultaneously transmit their \gls{CSI} feedback to the \gls{BS}. The scheduled user is the one that has the lowest value of the metric $v_i\left(t \right)$. Based on equation (\ref{statCSI}) of the chosen subset $\Lambda^*$, the proposed centralized algorithm $\Gamma^{cent}\left( t\right)$ verifies the inequality (\ref{eq.AvgComp_cent}) for any scheduling policy $\Gamma\left( t\right)$ including the centralized optimal policy $\Gamma_c^*\left( t\right)$ of \gls{OP} (\ref{OP_Gen}).
\[
\resizebox{1\hsize}{!}{$\sum\limits_{i=1}^{\mathrm{N}}\left[VP_{i}\left(\Gamma^{cent}\left(t\right),\bm{H}\left(t\right)\right)-Q_{i}\left(t\right)R_{i}\left(\Gamma^{cent}\left(t\right),\bm{H}\left(t\right)\right)\right]|\bm{Q}\left( t \right) $}
\]
\begin{equation}
\label{eq.AvgComp_cent}
\resizebox{1\hsize}{!}{$\leq \sum\limits_{i=1}^{\mathrm{N}}\left[VP_{i}\left(\Gamma\left(t\right),\bm{H}\left(t\right)\right)-Q_{i}\left(t\right)R_{i}\left(\Gamma\left(t\right),\bm{H}\left(t\right)\right)\right]|\bm{Q}\left( t \right)$}
\end{equation}
Since the optimal policy $\Gamma_{c}^*\left(t\right)$ verifies the OP (\ref{OP_Gen}), this policy satisfies the corresponding throughput constraint and guarantees by that the stability of the virtual queues. Supposing that  the arrival rate vector of these virtual queues is interior to the stability region of the system of queue, then for an $\epsilon >0$ we have:
\[
\mathbb{E}\left[R_i\left( \Gamma_{c}^*\left(t\right)\right)| {\bm{Q}}\left( t \right) \right]=\mathbb{E}\left[R_i\left( \Gamma_{c}^*\left(t\right)\right) \right] \geq R_{th} + \epsilon
\] 
 Same methodology as in \cite{GergiadisNeely2006} gives the following upper bound of the total average backlogs of the virtual queues:
\begin{equation}
\label{eq;StabilityQueuesDist}
\limsup\limits_ {T \rightarrow \infty} \frac{1}{T}\sum\limits_{t=0}^{T-1} \sum\limits_{i=1}^\mathrm{N}\mathbb{E}\left[Q_i\left(t\right) \right] \leq \frac{C+B}{\epsilon}
\end{equation}
 with $B$ a finite constant such that:
 \[
  \limsup\limits_ {T \rightarrow \infty} \frac{1}{T}\sum\limits_{t=0}^{T-1}\sum\limits_{i=1}^{\mathrm{N}} \mathbb{E}\left[VP_i\left( \Gamma_{c}^*\left(t\right)\right) \right]\leq B
  \] 
Thus, proposition \ref{prop.StabCentQeuues} is verified and all the virtual queues in the system are strongly stable. Hence, when the arrival rate at the virtual queues is less than its average departure rate then the proposed centralized scheduling satisfies the throughput constraint in (\ref{OP_Gen}). Pursuing with the Lyapunov optimization of queuing networks leads to the proposition \ref{prop.Cent_bounds} as follows:
 \[
\frac{1}{T}\sum\limits_{t=0}^{T-1} \sum\limits_{i=1}^\mathrm{N}\mathbb{E}\left[P_i\left(\Gamma^{cent}\left(t\right)\right) \right]   \leq \frac{C}{V}+\frac{\mathbb{E}\left[L_{\bm{Q}}\left(0\right) \right]}{VT} 
\]
\[+\frac{1}{T}\sum\limits_{t=0}^{T-1} \sum\limits_{i=1}^\mathrm{N}\mathbb{E}\left[P_i\left(\Gamma_c^{*}\left(t\right)\right) \right]
\]

Therefore, the proposed centralized algorithm achieves the performance of the \textit{best} solution of the limited-feedback problem (\ref{OP_Gen}) with a distance of $O\left(\frac{1}{V}\right)$ and a time average queue backlog of $O\left({V}\right)$. 

\subsection*{B. Proof of proposition \ref {prop.Dist_bounds} \label{proof.Dist_bounds}}
We denote by $\Gamma^{dist}$ the proposed distributed scheduling and by $\Gamma_{id}^*$ the \textit{ideal} scheduling policy that optimizes problem (\ref{OP_Gen}) while assuming the global knowledge of the instantaneous \gls{CSI} of \gls{D2D} links. Lyapunov technique (see \cite{GergiadisNeely2006}) is applied for verifying proposition \ref{prop.Dist_bounds} that evaluates the performance of the distributed algorithm $\Gamma^{dist}$ compared to that of the \textit{ideal} scheduling in terms of time average of the users' power consumption.
Applying the methodology in \cite{GergiadisNeely2006} gives that the drift-plus-penalty expression is upper bounded by the following:
\[ \Delta \left( \bm{Q}\left( t \right)\right)\leq C+\sum\limits_{i=1}^{\mathrm{N}}Q_{i}\left(t\right)R_{th}
\] 
\[+V\mathbb{E}\left[\sum_{i=1}^{\mathrm{N}}P_{i}\left(\Gamma^{dist}\left(t\right),\bm{H}\left(t\right)\right)|{\bm{Q}}\left( t \right)\right]
 \]
\begin{equation}
\label{eq.LyapDrift}-\mathbb{E}\left[\sum\limits_{i=1}^{\mathrm{N}}Q_{i}\left(t\right)R_{i}\left(\Gamma^{dist}\left(t\right),\bm{H}\left(t\right)\right)|{\bm{Q}}\left( t \right)\right]
\end{equation}
 where $\Gamma^{dist}\left(t \right)$ is the proposed distributed scheduling policy and C is a finite constant such that 
 \[
 \mathbb{E}\left[\sum \limits_{i=1}^{\mathrm{N}}\frac{R_{th}^{2}+R_{i}^{2}\left(t\right)}{2}|{\bm{Q}}\left( t \right)\right] \leq C\].

  In order to compare between the proposed distributed scheduling and the \textit{ideal} one, we recall the respective procedure of these policies. For the \textit{ideal} scheduling $\Gamma_{id}^*$, we suppose the global \gls{CSI} knowledge of the \gls{D2D} communications. Based on these \gls{D2D} channel states, both the transmission power and the throughput of all the \gls{D2D} links are recognized. Thus, the scheduled user corresponds to the one that optimizes problem (\ref{OP_Gen}) (i.e. the one that minimizes the consumption power under the throughput constraint).

For the proposed distributed scheduling, at each time-slot $t$, each \gls{D2D} user $i$ estimates its channel state and deduces its energy metric $v_i\left(t\right)$. Due to an existing  mapping between the discrete values of the energy consumption metric and the $K^{\left(2\right)}$ available resource elements for \gls{CSI} feedback; each user $i$ sends a simple \gls{CSI} indicator on the \gls{RE} that maps with its discrete value $\tilde{v}_i\left(t\right)$. Supposing a non collision scenario, the scheduled user corresponds to the one that transmits its \gls{CSI} indicator on the \gls{RE} with the lowest index (i.e. the user that has the lowest value of the energy metric $v_i\left(t\right)$).  Thus, compared to any other scheduling policy $\Gamma\left(t\right)$, the proposed distributed scheduling $\Gamma^{dist}\left(t\right)$ verifies the following at each time slot $t$:
\[\resizebox{1\hsize}{!}{$
\sum\limits_{i=1}^{\mathrm{N}}\left[VP_{i}\left(\Gamma^{dist}\left(t\right),\bm{H}\left(t\right)\right)-Q_{i}\left(t\right)R_{i}\left(\Gamma^{dist}\left(t\right),\bm{H}\left(t\right)\right)\right]|\bm{Q}\left( t \right) $}
\]
\[ \resizebox{1\hsize}{!}{$\leq \sum\limits_{i=1}^{\mathrm{N}}\left[VP_{i}\left(\Gamma\left(t\right),\bm{H}\left(t\right)\right)-Q_{i}\left(t\right)R_{i}\left(\Gamma\left(t\right),\bm{H}\left(t\right)\right)\right]|\bm{Q}\left( t \right)$}
\]

The equation above is verified for any scheduling policy $\Gamma\left(t\right)$ including the \textit{ideal} policy $\Gamma_{id}^*\left(t\right)$ where the network has the global \gls{CSI} knowledge of \gls{D2D} communications. Thus:
\[\resizebox{1\hsize}{!}{$
\mathbb{E}\left [\sum\limits_{i=1}^{\mathrm{N}}\left[VP_{i}\left(\Gamma^{dist}\left(t\right),\bm{H}\left(t\right)\right)-Q_{i}\left(t\right)R_{i}\left(\Gamma^{dist}\left(t\right),\bm{H}\left(t\right)\right)\right]|\bm{Q}\left( t \right)\right] $}
\]
\begin{equation}
\label{eq.AvgComp}
\resizebox{1\hsize}{!}{$
 \leq \mathbb{E}\left [\sum\limits_{i=1}^{\mathrm{N}}\left[VP_{i}\left(\Gamma_{id}^*\left(t\right),\bm{H}\left(t\right)\right)-Q_{i}\left(t\right)R_{i}\left(\Gamma_{id}^*\left(t\right),\bm{H}\left(t\right)\right)\right]|\bm{Q}\left( t \right)\right] $}
\end{equation}

Since the ideal policy $\Gamma_{id}^*\left(t\right)$ verifies the OP (\ref{OP_Gen}), this policy verifies the corresponding throughput constraint and guarantees by that the stability of the virtual queues. Supposing that the arrival rate vector of these virtual queues is interior to the stability region of the system of queue, then for an $\epsilon >0$ we have:
\[
\mathbb{E}\left[R_i\left( \Gamma_{id}^*\left(t\right)\right)| {\bm{Q}}\left( t \right) \right]=\mathbb{E}\left[R_i\left( \Gamma_{id}^*\left(t\right)\right) \right] \geq R_{th} + \epsilon
\] 
 Same methodology as in \cite{GergiadisNeely2006} gives the following upper bound of the total average backlogs of the virtual queues:
\begin{equation}
\label{eq;StabilityQueuesDist}
\limsup\limits_ {T \rightarrow \infty} \frac{1}{T}\sum\limits_{t=0}^{T-1} \sum\limits_{i=1}^\mathrm{N}\mathbb{E}\left[Q_i\left(t\right) \right] \leq \frac{C+B}{\epsilon}
\end{equation}
 with $B$ a finite constant such that $ \limsup\limits_ {T \rightarrow \infty} \frac{1}{T}\sum\limits_{t=0}^{T-1}\sum\limits_{i=1}^{\mathrm{N}} \mathbb{E}\left[VP_i\left( \Gamma_{id}^*\left(t\right)\right) \right]$ $\leq B$.
 
Thus, all the virtual queues in the system are strongly stable. Hence, when the arrival rate at the virtual queues is less than its average departure rate then the distributed scheduling policy satisfies the throughput constraint in (\ref{OP_Gen}). Pursuing with the Lyapunov optimization of queuing networks leads to the proposition \ref{prop.Dist_bounds} as follows:
 \[
\frac{1}{T}\sum\limits_{t=0}^{T-1} \sum\limits_{i=1}^\mathrm{N}\mathbb{E}\left[P_i\left(\Gamma^{dist}\left(t\right)\right) \right]   \leq \frac{C}{V}+\frac{\mathbb{E}\left[L_{\bm{Q}}\left(0\right) \right]}{VT} 
\]
\[+\frac{1}{T}\sum\limits_{t=0}^{T-1} \sum\limits_{i=1}^\mathrm{N}\mathbb{E}\left[P_i\left(\Gamma_{id}^{*}\left(t\right)\right) \right]
\]

Therefore, we propose an algorithm based on a distributed approach that benefits from the \gls{D2D} local knowledge of their \gls{CSI} values. When collision is bypassed, the proposed distributed algorithm achieves the performance of the \textit{ideal} solution of (\ref{OP_Gen}) with a distance of $O\left(\frac{1}{V}\right)$ and a time average queue backlog of $O\left({V}\right)$. 

\subsection*{C. Proof of proposition \ref{prop.Prob_Colli} \label{proof.Prob_Colli}}
Recall that $K^{\left( 2\right)}$ \gls{RE}s are available for the transmission of the \gls{CSI} indicators.
A collision occurs at the level of the $k^{th}$ \gls{RE} when at least two users have the same quantized energy consumption metric (e.g. $\tilde{v}_i=\tilde{v}_j=a_k$ with $i \neq j$ hence a collision occurs at the $k^{th}$ element). Therefore, the probability of overall collision $P_c$ corresponds to the probability that none of the \gls{D2D} users in the network is scheduled (i.e. the \gls{CSI} indicator's transmission of each \gls{D2D} user collides with at least another one). We compute this probability of collision $P_c $ assuming a single bit rate model ($\mathrm{M}=1$) with $\mathsf{R}$ and  $\mathsf{S}$ respectively corresponding to the bit-rate and \gls{SNR} threshold.

For each element $a_j \in \mathbb{S}_v$ (with $1\leq j\leq K^{\left( 2\right)}$), we define the following two events whose probabilities of occurrence  are computed in the sequel:
\begin{itemize}
\item $A_{i,j}$: $i^{th}$ \gls{D2D} link has  $\tilde{v}_i = a_j$ 
\item $B_{i,j}$: $i^{th}$ \gls{D2D} link has $\tilde{v}_i  \geq a_{\min \left \lbrace j +1,K^{\left( 2\right)} \right \rbrace}$
\end{itemize} 

A Rayleigh fading channel $h_i$ with zero mean and unit variance is considered, hence the squared magnitude $|h_i|^2$ has an exponential distribution of parameter one. Therefore, we can deduce the probability of the two above events. For the simplification of coming expressions, we use the following notation: 
\[
c_{i,j}=-\frac{V\mathsf{S}N_{o}}{\left(a_{j}+Q_{i}\mathsf{R}\right)L_{i}}
\]
\begin{enumerate}
\item {\textbf{The probability of $A_{i,j}$ (for $1\leq j\leq K^{\left( 2\right)}$)}}
\[
\mathbb{P}\left(A_{i,j}\right)=\mathbb{P}\left(\tilde{v}_{i}=a_{j}\right)=\mathbb{P}\left(v_{i}\epsilon\left] a_{j},a_{j+1}\right]\right)
\]
\[
=\mathbb{P}\left(a_{j}< \frac{V\mathsf{S} N_{o}}{|h_{i}|^{2}L_i}-Q_{i}\mathsf{R} \leq a_{j+1}\right)
\]
\[
\resizebox{0.9\hsize}{!}{$=2\left[\exp\left(-\frac{V\mathsf{S}N_{o}}{\left(a_{j}+Q_{i}\mathsf{R}\right)L_i}\right)-\exp\left(-\frac{V\mathsf{S}N_{o}}{\left(a_{j+1}+Q_{i} \mathsf{R}\right)L_i}\right)\right]$}
\]
\[
=2\left[\exp\left( c_{i,j}\right)-\exp\left( c_{i,j+1}\right)\right]
\]
where $a_{K^{\left( 2\right)}+1}$ is equal to $+\infty$.\\
\item \textbf{The probability of $B_{i,j}$ (for $1\leq j\leq K^{\left( 2\right)}-1$)}
\[
\mathbb{P}\left(B_{i,j}\right)=\mathbb{P}\left( \tilde{v}_i  \geq a_{j+1} \right)
\]
\[
=1-2\exp\left(-\frac{V\mathsf{S}N_{o}}{\left(a_{j +1}+Q_{i}\mathsf{R}\right)L_i}\right)=1-2\exp\left(c_{i,j+1}\right)
\]
\end{enumerate}
$\bar{p}_c \lbrace i,j \rbrace$ denotes the probability that only the user $i$ has its quantized value $\tilde{v}_i$ equals to $a_j \in \mathbb{S}_v$ for $1 \leq j \leq K^{\left(2\right)}$ and is given by:
\[\bar{p}_c \lbrace i,j \rbrace=\mathbb{P}\left(A_{i,j}\right)\prod_{k=1\neq i}^{\mathrm{N}}\mathbb{P}\left(\bar{A}_{k,j}\right)\]
\[
=2\left[\exp\left( c_{i,j-1}\right)-\exp\left( c_{i,j}\right)\right]
\]
\[
\times \prod_{k=1\neq i}^{N}\left[1-2\exp\left( c_{k,j-1}\right)+2\exp\left( c_{k,j}\right)\right]
\]
The probability of collision $P_c$ is deduced as follows:
\[
P_{c}=\mathbb{P}\lbrace \text{no scheduled D2D user}\rbrace
\]
\[
 =1-\sum_{i=1}^{\mathrm{N}}\mathbb{P}\lbrace \text{D2D user $i$ scheduled}\rbrace
\]\[
=1-\sum_{i=1}^{\mathrm{N}}\sum_{j=1}^{K^{\left(2\right)}}\mathbb{P}\lbrace \text{only the $i^{th}$ D2D user has } \tilde{v}_i=a_j\rbrace
\]
\[
 \times \mathbb{P}\lbrace \text{$j=$ lowest RE index without collision}\rbrace  \]
\[
P_{c}=1-\sum_{i=1}^{\mathrm{N}}\sum_{j=1}^{K^{\left(2\right)}}\bar{p}_c \lbrace i,j \rbrace \prod_{k=1}^{j-1}\left(1-\sum_{l=1 \neq i}^{\mathrm{N}}\bar{p}_c \lbrace l,k \rbrace \right)
\]

\subsection*{D. Proof of theorem \ref{eq.Th_PC2_prime} \label{proof.Th_PC2_prime}}
The proof is split into 4 steps. In \textbf{step 1}, we start by expressing the lower bounds of $\mathbb{P}\left(A_{i,j}\right)$ and $\mathbb{P}\left(\bar{A}_{i,j}\right)$. These bounds are used, in \textbf{step 2}, for computing the upper and the lower bounds of the probability $\bar{p}_c \lbrace i,j \rbrace$ (i.e. probability that only the \gls{D2D} user $i$ transmits its \gls{CSI} indicator on \gls{RE} $j$). Therefore, in \textbf{step 3}, we find the upper bound of the collision probability $P_c$. In \textbf{step 4}, we conclude the value of the Lyapunov constant $V$ that reduces the collision probability $P_c$ to $\epsilon$.

\subsubsection*{\textbf{Step 1}} \textbf{Verify that for all $\left \lbrace i, j\right \rbrace$: }
\[
\mathbb{P}\left(A{}_{i,j}\right) \geq \mathbb{P}\left(B_{i,K^{\left(2\right)}-1}\right) 
\]
and 
\[
\mathbb{P}\left(\bar{A}_{i,j}\right)\geq  \mathbb{P}\left(B_{i,K^{\left(2\right)}-1}\right)
\]
For $\,\, 1\leq j \leq K^{\left(2\right)}-1$, we have the two following expressions:
\[\resizebox{1\hsize}{!}{$
\mathbb{P}\left(A{}_{i,j}\right)=2\left[\exp\left(\frac{-VSN_{0}L_{i}^{-1}}{a_{j}+Q_{i}R}\right)-\exp\left(\frac{-VSN_{0}L_{i}^{-1}}{a_{j+1}+Q_{i}R}\right)\right]$}
\] 

\[
\resizebox{1\hsize}{!}{$\mathbb{P}\left(A{}_{i,K^{\left(2\right)}}\right)=\mathbb{P}\left(B_{i,K^{\left(2\right)}-1}\right)=\left[1-2\exp\left(\frac{-VSN_{0}L_{i}^{-1}}{a_{k^{\left(2\right)}}+Q_{i}R}\right)\right]$}
\]
Then, $\mathbb{P}\left(A{}_{i,j}\right) \geq \mathbb{P}\left(B{}_{i,K^{\left(2\right)}-1}\right)$ and $\mathbb{P}\left(B{}_{i,j}\right) \geq \mathbb{P}\left(B{}_{i,K^{\left(2\right)}-1}\right) \, \forall 0 \leq i \leq N \, , \, \forall 1 \leq j \leq K^{\left(2\right)}$.

Based on the definition of $\mathbb{P}\left(A{}_{i,j}\right)$ and $\mathbb{P}\left(B{}_{i,j}\right)$ one can see that:
\[
\mathbb{P}\left(A{}_{i,j}\right)=\mathbb{P}\left(B{}_{i,j-1}\right)-\mathbb{P}\left(B{}_{i,j}\right) 
\]
\[
\Rightarrow \mathbb{P}\left(\bar{A}{}_{i,j}\right)=1-\mathbb{P}\left(B{}_{i,j-1}\right)+\mathbb{P}\left(B{}_{i,j}\right)
\]
\[
\Rightarrow \mathbb{P}\left(\bar{A}{}_{i,j}\right) \geq \mathbb{P}\left(B{}_{i,j}\right) \geq \mathbb{P}\left(B{}_{i,K^{\left(2\right)}-1}\right)
\]

\subsubsection*{\textbf{Step 2}} \textbf{Upper and lower bounds of  $\bar{p}_c \lbrace i,j \rbrace$:} Based on the lower bound of $\mathbb{P}\left(A{}_{i,j}\right)$ and $\mathbb{P}\left(\bar{A}_{i,j}\right)$ we deduce that:
\begin{itemize}
\item Lower bound: $
\bar{p}_c \lbrace i,j \rbrace \geq \prod\limits_{i=i}^\mathrm{N} \mathbb{P}\left(B{}_{i,K^{\left(2\right)}}\right)
$
\item Upper bound: $\bar{p}_c \lbrace i,j \rbrace  \leq \mathbb{P}\left(\bar{B}_{i+1\bmod\mathrm{N} ,K^{\left(2 \right)}}\right)$
\end{itemize}

\subsubsection*{\textbf{Step 3}} \textbf{Upper bound of $P_c$:} Based on the upper and lower bound of $\bar{p}_c \lbrace i,j \rbrace$, we deduce that:
\[
P_{c}=1-\sum_{i=1}^{\mathrm{N}}\sum_{j=1}^{K^{\left(2\right)}}\bar{p}_{c}\left\{ i,j\right\} \prod_{1}^{j-1}\left(1-\sum_{l=1\neq i}^{\mathrm{N}}\bar{p}_{c}\left\{ l,k\right\} \right)
\]
\[
P_{c}\leq1-\left[\prod\limits_{m=1}^{\mathrm{N}}\mathbb{P}\left(B_{m,K^{\left(2\right)}-1}\right)\right] 
\]
\[
\times \sum\limits_{i=1}^{\mathrm{N}}\sum\limits_{j=1}^{K^{\left(2\right)}}\left[1-\sum\limits_{l=1\neq i}^{\mathrm{N}}\mathbb{P}\left(\bar{B}_{l+1 \bmod \mathrm{N},K^{\left(2\right)}-1}\right)\right]^{j}
\]
\[
 P_{c}\leq1-\mathrm{N}K^{\left(2\right)}\prod \limits_{m=1}^{\mathrm{N}}\left[1-2\exp\left(\frac{-V\mathsf{S} N_{0}L_m^{-1}}{a_{K^{\left(2\right)}}+Q_{m}\mathsf{R}}\right)\right] 
 \]
 \[
 \times\left[1-2\sum\limits_{l=1}^{\mathrm{N}}\exp\left(\frac{-V\mathsf{S}N_{0}L_l^{-1}}{a_{K^{\left(2\right)}}+Q_{l}\mathsf{R}}\right)\right]^{K^{\left(2\right)}}
\]
For all $1 \leq i\leq N,$ we can verify that:
\[
\exp\left(\frac{-V\mathsf{S}N_{0}L_{l}^{-1}}{a_{K^{\left(2\right)}}+Q_{l}\mathsf{R}}\right)\leq\exp\left(\frac{-V\mathsf{S}N_{0}L_{min}^{-1}}{VP_{max}+R_{th}\mathsf{R}t_{simu}}\right)
\]
where $L_{min}$ is the path-loss over a \gls{D2D} link with a peer distance equals to the minimum \gls{D2D} peer distance (i.e. $d_{min}$). Hence, 
\[\resizebox{1\hsize}{!}{$
P_{c}\leq1-\mathrm{N} K^{\left(2\right)}\left[1-2\mathrm{N} \exp\left(\frac{-V\mathsf{S}N_{0}L_{min}^{-1}}{VP_{max}+R_{th}\mathsf{R}t_{simu}}\right)\right]^{\mathrm{N}+K^{\left(2\right)}}$}
\]
\subsubsection*{\textbf{Step 4}} \textbf{Deduce $V\left(\epsilon\right)$ :} In order to bound the collision probability $P_c$ by $\epsilon$, the Lyapunov constant $V\left( \epsilon \right)$  should verify theorem \ref{eq.Th_PC2_prime}.
\vspace{-10pt}

\ifCLASSOPTIONcaptionsoff
  \newpage
\fi

\bibliographystyle{IEEEtran}
\bibliography{thesis_bibliography}

\begin{thebibliography}{10}
\providecommand{\url}[1]{#1}
\csname url@samestyle\endcsname
\providecommand{\newblock}{\relax}
\providecommand{\bibinfo}[2]{#2}
\providecommand{\BIBentrySTDinterwordspacing}{\spaceskip=0pt\relax}
\providecommand{\BIBentryALTinterwordstretchfactor}{4}
\providecommand{\BIBentryALTinterwordspacing}{\spaceskip=\fontdimen2\font plus
\BIBentryALTinterwordstretchfactor\fontdimen3\font minus
  \fontdimen4\font\relax}
\providecommand{\BIBforeignlanguage}[2]{{%
\expandafter\ifx\csname l@#1\endcsname\relax
\typeout{** WARNING: IEEEtran.bst: No hyphenation pattern has been}%
\typeout{** loaded for the language `#1'. Using the pattern for}%
\typeout{** the default language instead.}%
\else
\language=\csname l@#1\endcsname
\fi
#2}}
\providecommand{\BIBdecl}{\relax}
\BIBdecl

\bibitem{Ahmad2009}
A.~Ahmad and M.~Assaad, ``Margin adaptive resource allocation in downlink ofdma
  system with outdated channel state information,'' in \emph{2009 IEEE 20th
  International Symposium on Personal, Indoor and Mobile Radio Communications},
  Sep. 2009, pp. 1868--1872.

\bibitem{Deghel2018}
M.~Deghel, M.~Assaad, M.~Debbah, and A.~Ephremides, ``Traffic-aware scheduling
  and feedback allocation in multichannel wireless networks,'' \emph{IEEE
  Transactions on Wireless Communications}, vol.~17, no.~8, pp. 5520--5534, Aug
  2018.

\bibitem{Ahmad2010}
A.~Ahmad and M.~Assaad, ``Joint resource optimization and relay selection in
  cooperative cellular networks with imperfect channel knowledge,'' in
  \emph{2010 IEEE 11th International Workshop on Signal Processing Advances in
  Wireless Communications (SPAWC)}, June 2010, pp. 1--5.

\bibitem{Hassan2011}
N.~U. Hassan and M.~Assaad, ``Dynamic resource allocation in multi-service
  ofdma systems with dynamic queue control,'' \emph{IEEE Transactions on
  Communications}, vol.~59, no.~6, pp. 1664--1674, June 2011.

\bibitem{Ye2014}
Q.~Ye, M.~Al-Shalash, C.~Caramanis, and J.~G. Andrews, ``Resource optimization
  in device-to-device cellular systems using time-frequency hopping,''
  \emph{IEEE Transactions on Wireless Communications}, vol.~13, no.~10, pp.
  5467--5480, Oct 2014.

\bibitem{Maghsudi2016}
S.~Maghsudi and S.~Stańczak, ``Hybrid centralized and distributed resource
  allocation for device-to-device communication underlaying cellular
  networks,'' \emph{IEEE Transactions on Vehicular Technology}, vol.~65, no.~4,
  pp. 2481--2495, April 2016.

\bibitem{Zhang2013}
R.~Zhang, X.~Cheng, L.~Yang, and B.~Jiao, ``Interference-aware graph based
  resource sharing for device-to-device communications underlaying cellular
  networks,'' April 2013, pp. 140--145.

\bibitem{Aijaz2014}
A.~Aijaz, M.~Tshangini, M.~R. Nakhai, X.~Chu, and A.~Aghvami,
  ``Energy-efficient uplink resource allocation in {LTE} networks with
  {M2M/H2H} co-existence under statistical qos guarantees,''
  \emph{Communications, IEEE Transactions on}, vol.~62, pp. 2353--2365, 07
  2014.

\bibitem{Zulhasnine2010}
M.~Zulhasnine, C.~Huang, and A.~Srinivasan, ``Efficient resource allocation for
  device-to-device communication underlaying {LTE} network,'' \emph{IEEE 6th
  International Conference on Wireless and Mobile Computing, Networking and
  Communications}, pp. 368--375, Oct 2010.

\bibitem{Han2012}
T.~Han, R.~Yin, Y.~Xu, and G.~Yu, ``Uplink channel reusing selection
  optimization for device-to-device communication underlaying cellular
  networks,'' \emph{2012 IEEE 23rd International Symposium on Personal, Indoor
  and Mobile Radio Communications - (PIMRC)}, pp. 559--564, Sept 2012.

\bibitem{Su2013}
L.~Su, Y.~Ji, P.~Wang, and F.~Liu, ``Resource allocation using particle swarm
  optimization for {D2D} communication underlay of cellular networks,''
  \emph{IEEE Wireless Communications and Networking Conference (WCNC)}, pp.
  129--133, April 2013.

\bibitem{Yu2014}
G.~Yu, L.~Xu, D.~Feng, R.~Yin, G.~Y. Li, and Y.~Jiang, ``Joint mode selection
  and resource allocation for device-to-device communications,'' \emph{IEEE
  Transactions on Communications}, vol.~62, no.~11, pp. 3814--3824, Nov 2014.

\bibitem{Vitale2015Modeling}
C.~Vitale, V.~Mancuso, and G.~Rizzo, ``Modelling {D2D} communications in
  cellular access networks via coupled processors,'' Jan 2015, pp. 1--8.

\bibitem{Feng2016}
D.~Feng, L.~Lu, Y.~W. Yi, G.~Y. Li, G.~Feng, and S.~Li, ``{QoS}-aware resource
  allocation for device-to-device communications with channel uncertainty,''
  \emph{IEEE Transactions on Vehicular Technology}, vol.~65, no.~8, Aug 2016.

\bibitem{Tang2016}
H.~Tang and Z.~Ding, ``Monotonic optimization for power control of d2d underlay
  with partial csi,'' \emph{2016 IEEE International Conference on
  Communications (ICC)}, pp. 1--6, May 2016.

\bibitem{Liu2015}
A.~Liu, V.~Lau, F.~Zhuang, and J.~Chen, ``Mixed timescale cross-layer
  optimization for multi-antenna {D2D} networks,'' \emph{2015 IEEE Global
  Communications Conference (GLOBECOM)}, pp. 1--6, Dec 2015.

\bibitem{Ye2015}
Q.~Ye, M.~Al-Shalash, C.~Caramanis, and J.~G. Andrews, ``Distributed resource
  allocation in device-to-device enhanced cellular networks,'' \emph{IEEE
  Transactions on Communications}, vol.~63, no.~2, pp. 441--454, Feb 2015.

\bibitem{Xu2013}
C.~Xu, L.~Song, Z.~Han, Q.~Zhao, X.~Wang, X.~Cheng, and B.~Jiao, ``Efficiency
  resource allocation for device-to-device underlay communication systems: A
  reverse iterative combinatorial auction based approach,'' \emph{IEEE Journal
  on Selected Areas in Communications}, vol.~31, no.~9, pp. 348--358, September
  2013.

\bibitem{Li2014}
Y.~Li, D.~Jin, J.~Yuan, and Z.~Han, ``Coalitional games for resource allocation
  in the device-to-device uplink underlaying cellular networks,'' \emph{IEEE
  Transactions on Wireless Communications}, vol.~13, no.~7, pp. 3965--3977,
  July 2014.

\bibitem{Yin2015}
R.~Yin, C.~Zhong, G.~Yu, Z.~Zhang, K.~Wong, and X.~Chen, ``Joint spectrum and
  power allocation for {D2D} communications underlaying cellular networks,''
  \emph{IEEE Transactions on Vehicular Technology}, 2015.

\bibitem{MJung_ModeSelecion}
M.~Jung, K.~Hwang, and S.~Choi, ``Joint mode selection and power allocation
  scheme for power-efficient device-to-device (d2d) communication,'' pp. 1--5,
  May 2012.

\bibitem{Xiao2011}
X.~Xiao and J.~L. X.~Tao~and, ``A qos-aware power optimization scheme in ofdma
  systems with integrated device-to-device (d2d) communications,'' in
  \emph{2011 IEEE Vehicular Technology Conference (VTC Fall)}, Sept 2011, pp.
  1--5.

\bibitem{Silva2014}
J.~M.~B. da~Silva, G.~Fodor, and T.~F. Maciel, ``Performance analysis of
  network-assisted two-hop {D2D} communications,'' \emph{IEEE Globecom
  Workshops (GC Wkshps)}, pp. 1050--1056, Dec 2014.

\bibitem{Wang2015}
F.~Wang, C.~Xu, L.~Song, and Z.~Han, ``Energy-efficient resource allocation for
  device-to-device underlay communication,'' \emph{IEEE Transactions on
  Wireless Communications}, vol.~14, no.~4, April 2015.

\bibitem{RitaOverlay2017}
R.~Ibrahim, M.~Assaad, B.~Sayrac, and A.~Ephremides, ``Overlay d2d vs. cellular
  communications: A stability region analysis,'' in \emph{2017 International
  Symposium on Wireless Communication Systems (ISWCS)}, Aug 2017, pp. 378--383.

\bibitem{neely2010stochastic}
M.~J. Neely, \emph{Stochastic network optimization with application to
  communication and queueing systems}.\hskip 1em plus 0.5em minus 0.4em\relax
  Morgan \& Claypool Publishers, 2010, vol.~3, no.~1.

\bibitem{TS36213}
T.~3rd Generation Partnership~Project, ``Technical specification {3GPP TS}
  36.213: Evolved universal terrestrial radio access ({E-UTRA}); physical layer
  procedures,'' 2009.

\bibitem{Singh2013}
J.~Singh, Z.~Pi, and H.~Nguyen, ``Low-complexity optimal {CSI} feedback in
  {LTE},'' \emph{IEEE 10th Consumer Communications and Networking Conference
  (CCNC)}, pp. 472--478, Jan 2013.

\bibitem{bouguen2012lte}
Y.~Bougurn, E.~Hardouin, and F.~Wolff, \emph{LTE et les r\'eseaux 4G}.\hskip
  1em plus 0.5em minus 0.4em\relax Groupe Eyrolles, 2012.

\bibitem{3GPP_ProSe}
TR36834, ``Technical report {3GPP TR} 36.843: Study on {LTE}
  {D}evice-to-{D}evice proximity services v12.0.1 2014,'' 2014.

\bibitem{Qualcomm}
Q.~Incorporated, ``{3GPP} contribution {R}1-142940: Discussion on {ProSe}
  provisioning parameter,'' 2014.

\bibitem{Chen2010}
T.~Chen, H.~Kim, and Y.~Yang, ``Energy efficiency metrics for green wireless
  communications,'' in \emph{2010 International Conference on Wireless
  Communications Signal Processing (WCSP)}, Oct 2010, pp. 1--6.

\bibitem{GergiadisNeely2006}
L.~Georgiadis, M.~J. Neely, and L.~Tassiulas, \emph{Resource allocation and
  cross-layer control in wireless networks}, 2006.

\end{thebibliography}


\begin{thebibliography}{1}

\bibitem{IEEEhowto:kopka}
H.~Kopka and P.~W. Daly, \emph{A Guide to \LaTeX}, 3rd~ed.\hskip 1em plus
  0.5em minus 0.4em\relax Harlow, England: Addison-Wesley, 1999.

\end{thebibliography}

\ifCLASSOPTIONcaptionsoff
  \newpage
\fi



%
%

\begin{IEEEbiographynophoto}{Rita Ibrahim}
received the B.E. degree from Telecom-Paristech, France, and Lebanese University, Lebanon, in 2015, and the M.Sc. degree in advanced communication networks jointly from Ecole Polytechnique and Telecom-Paristech, France, in 2015. She obtained a PhD degree from CentraleSupélec, France, in 2019. She is currently a research engineer in Orange Labs, France. Her current research interests include analytic modeling and performance evaluation of cellular networks, stochastic network optimization, and radio access technologies for 5G networks.
\end{IEEEbiographynophoto}
\begin{IEEEbiographynophoto}{Mohamad Assaad}
received the MSc and PhD degrees (with high honors), both in telecommunications, from Telecom ParisTech, Paris, France, in 2002 and 2006, respectively. Since 2006, he has been with the Telecommunications Department at CentraleSupélec, where he is currently a professor and holds the TCL Chair on 5G. He is also a researcher at the Laboratoire des Signaux et Systèmes (L2S, CNRS). He has co-authored 1 book and more than 100 publications in journals and conference proceedings and serves regularly as TPC member or TPC co-chair for top international conferences. He is an Editor for the IEEE Wireless Communications Letters and the Journal of Communications and Information Networks. He has given in the past successful tutorials on 5G systems at various conferences including IEEE ISWCS'15 and IEEE WCNC'16 conferences. His research interests include 5G networks, MIMO systems, mathematical modelling of communication networks, stochastic network optimization and Machine Learning in wireless networks. 
\end{IEEEbiographynophoto}


\begin{IEEEbiographynophoto}{Berna Sayrac}
 is a senior research expert in Orange Labs. She received the B.S., M.S. and Ph.D. degrees from the Department of Electrical and Electronics Engineering of Middle East Technical University (METU), Turkey, in 1990, 1992 and 1997, respectively. She worked as an Assistant Professor at METU between 2000 and 2001, and as a research scientist at Philips Research France between 2001 and 2002. Since 2002, she is working at Orange Labs. Her current research activities and interests include 5G radio access and spectrum. She has taken responsibilities and been active in several FP7 European projects, such as FARAMIR, UNIVERSELF, and SEMAFOUR. She was the technical coordinator of the Celtic+ European project SHARING on self-organized heterogeneous networks, and also the technical manager of the H2020 FANTASTIC-5G project. Currently, she is the coordinator of the research program on RAN design and critical communications in Orange. She holds several patents and has authored more than fifty peer-reviewed papers in prestigious journals and conferences. She also acts as expert evaluator for research projects, as well as reviewer, TPC member and guest editor for various conferences and journals.

\end{IEEEbiographynophoto}

\begin{IEEEbiographynophoto}{Azeddine Gati}
received the Engineer degree in telecommunication and signal processing in 1996, the M.Sc. degree from the University of Rennes, Rennes, France, and the Ph.D. degree from the University of Pierre and Marie Curie (ParisVI), Paris, France, in 2000. From 1997 to 2001, he was engaged in studying circuit optimization and electromagnetic computational methods. Since 2001, he has been at Orange Labs, France Telecom, Issy les Moulineaux, France, where he is involved in research in the fields of applied electromagnetic for telecommunications systems. He is also involved in research on human interaction with radio waves. Since 2007, he has been leading research projects on sustainable development, including interactions of waves with human bodies, body area networks, wireless network planning tools, and energy in information, communication technology and service solutions. Dr. Gati is a member of the Union Radio-Scientifique Internationale (URSI), France.
\end{IEEEbiographynophoto}

\end{document}